\def\diag{\textup{diag}}
\DeclareMathOperator*{\Var}{Var}
\DeclareMathOperator*{\Cor}{Cor}
\DeclareMathOperator*{\rank}{rank}
\DeclareMathOperator{\Tr}{Tr}
\DeclareMathOperator*{\argmin}{argmin}
\DeclareMathOperator*{\argmax}{argmax}
\DeclareMathOperator*{\minimize}{minimize}
\newcommand{\bpm}{\begin{pmatrix}}
\newcommand{\epm}{\end{pmatrix}}
\newcommand{\be}{\begin{equation}}
\newcommand{\ee}{\end{equation}}
\newcommand{\bes}{\begin{equation*}}
\newcommand{\ees}{\end{equation*}}
\newtheorem{thm}{Theorem}
\def\R{\mathbb{R}}
\def\E{\mathbb{E}} % Expectation
\def\be{\boldsymbol{e}}
\def\bu{\boldsymbol{u}}
\def\bz{\boldsymbol{z}}
\def\bA{\boldsymbol{A}}
\def\bB{\boldsymbol{B}}
\def\bC{\boldsymbol{C}}
\def\bD{\boldsymbol{D}}
\def\bE{\boldsymbol{E}}
\def\bF{\boldsymbol{F}}
\def\bG{\boldsymbol{G}}
\def\bH{\boldsymbol{H}}
\def\bI{\boldsymbol{I}}
\def\bJ{\boldsymbol{J}}
\def\bL{\boldsymbol{L}}
\def\bM{\boldsymbol{M}}
\def\bN{\boldsymbol{N}}
\def\bP{\boldsymbol{P}}
\def\bQ{\boldsymbol{Q}}
\def\bR{\boldsymbol{R}}
\def\bS{\boldsymbol{S}}
\def\bT{\boldsymbol{T}}
\def\bU{\boldsymbol{U}}
\def\bV{\boldsymbol{V}}
\def\bW{\boldsymbol{W}}
\def\bX{\boldsymbol{X}}
\def\bY{\boldsymbol{Y}}
\def\bZ{\boldsymbol{Z}}
\def\bmu{\boldsymbol{\mu}}
\def\bGamma{\mathbf{\Gamma}}
\def\bLambda{\boldsymbol{\Lambda}}
\def\bPsi{\boldsymbol{\Psi}}
\def\bSigma{\boldsymbol{\Sigma}}
\def\bTheta{\boldsymbol{\Theta}}
\def\Col{\mathcal{C}}
\def\Row{\mathcal{R}}
\title{Exponential canonical correlation analysis with orthogonal variation}
\date{}
\author{Dongbang Yuan$^{1}$,
Yunfeng Zhang$^{1}$, Shuai Guo$^{2}$, Wenyi Wang$^{2}$, and 
Irina Gaynanova$^{1,*}$\\
\\
\small
$^{1}$ Department of Statistics, Texas A\&M University\\
\small
$^{2}$ The University of Texas MD Anderson Cancer Center\\
\small
$^{*}$ corresponding author, irinag@stat.tamu.edu}
\begin{document}

\maketitle

%  put the summary for your paper here
\begin{abstract}
Canonical correlation analysis (CCA) is a standard tool for studying associations between two data sources; however, it is not designed for data with count or proportion measurement types. In addition, while CCA uncovers common signals, it does not elucidate which signals are unique to each data source. To address these challenges, we propose a new framework for CCA based on exponential families with explicit modeling of both common and source-specific signals. Unlike previous methods based on exponential families, the common signals from our model coincide with canonical variables in Gaussian CCA, and the unique signals are exactly orthogonal. These modeling differences lead to a non-trivial estimation via optimization with orthogonality constraints, for which we develop an iterative algorithm based on a splitting method. Simulations show on par or superior performance of the proposed method compared to the available alternatives. We apply the method to analyze associations between gene expressions and lipids concentrations in nutrigenomic study, and to analyze associations between two distinct cell-type deconvolution methods in prostate cancer tumor heterogeneity study.
\end{abstract}

\vspace{0.1in}

%\begin{keywords}
\textbf{Keywords:}
Binomial family; Data integration; Dimension reduction; Matrix factorization; Optimization; Proportions data. 
%\end{keywords}

\section{Introduction}

Canonical Correlation Analysis (CCA) characterizes linear relationships between two sets of variables, and is commonly used to study associations between different data platforms in imaging and genomics \citep{Bach05aprobabilistic, Chi:2013gj, witten2009penalized}. However, while CCA uncovers common signals, it does not elucidate which signals are unique to each data source. Furthermore, standard CCA relies on the assumption of Gaussian distribution, and is not appropriate for analyses of datasets with count or proportion measurement types. Our first motivating example is nutrigenomic study \citep{martin2007novel}, which collected gene expression and lipid concentration data from the same mice. We are interested in finding the common and unique signals between gene expression and lipid metabolism in relation to wild-type versus mutant mice. While gene expression levels can be modelled by Gaussian distributions with appropriate normalization, the lipid concentrations are presented as proportions, many of which are close to zero (25\% of proportions with values 0.002 or less), violating the Gaussian assumption. Our second motivating example concerns tumor heterogeneity, as a profiled tumor tissue contains signals obtained from not only tumor cells, but also immune and stromal cells, which presents significant challenges for effective cancer treatment. Multiple cell-type deconvolution methods have been developed to evaluate cellular heterogeneity \citep{newman2015robust,li2017timer,wang2018transcriptome, wang2019deep}, with each method utilizing different biological information and different cell types to estimate the cellular
purity. It is thus of interest to investigate the information that is concordant across methods, as well as information that is method-specific. However, all methods generate proportion data, violating the Gaussian assumption.

Multiple methods have been developed that decompose the data matrices into both common and individual signals \citep{Lock:2013ez, shu2020d, OnPLS,gaynanova2019structural}. However, these methods are designed for Gaussian data, and are not appropriate for proportion or count measurements. \citet{Zoh:2016fe,yoon2020sparse} propose non-Gaussian extension of CCA, however the corresponding models are not designed for proportions data, and neither method can extract individual information. Several methods tackle both challenges by considering common and individual decomposition of natural parameter matrices in exponential family framework, with \citet{klamibayesian} taking a bayesian approach, and \citet{li2018general} taking a frequentist approach. However, these decomposition-based methods assume the common scores to be identical between two datasets rather than highly correlated, thus they do not reduce to standard CCA even in the Gaussian cases. Furthermore, majority of matrix decomposition methods \citep{Lock:2013ez, klamibayesian, li2018general} do not enforce orthogonality between the individual signals, allowing these signals to embed correlated information.

In this work, we propose to tackle both challenges within exponential family framework by considering low-rank decomposition of natural parameter matrices with common and individual components. We refer to our approach as Exponential CCA (ECCA). Unlike existing approaches based on exponential families \citep{klamibayesian, li2018general}, our model allows common scores to be different (but correlated), and enforces orthogonality between individual signals (thus no shared information is retained). These modeling differences lead to significantly more challenging estimation problem as it involves non-convex optimization with orthogonality constraints. To solve this problem, we derive an alternating algorithm based on the adaptation of the splitting method for orthogonality constrained problems \citep{Lai:2014dq}. Our algorithm converges in all numerical studies, with ECCA having on-par or superior estimation performance compared to competing methods. In application to nutrigenomic study \citep{martin2007novel}, ECCA is effective in extracting common and individual signals that separate mouse genotype effect from the diet effects. In application to tumor heterogeneity study in prostate cancer, ECCA is effective in extracting common and individual signals that relate to progression-free survival probability.

The rest of the paper is organized as follows. Section~\ref{sec:lrccamodel} introduces the proposed ECCA model. Section~\ref{sec:estimation} derives the estimation algorithm.  Section~\ref{sec:lrccaSimu} compares ECCA with available methods in simulation studies. Section~\ref{sec:data} describes application of ECCA to (i) nutrigenomic study; (ii) tumor heterogeneity study. Section~\ref{sec:eccaDis} concludes with discussion.

\textbf{Notation:} For a matrix $\bA$, we use $\bA^\top$ to denote its transpose, $\bA^{+}$ to denote its Moore-Penrose inverse, $\Col(\bA)$ to denote its column space and $\Row(\bA)$ to denote its row space. We use $\bP_{\bA} = \bA\bA^{+}$ to denote the projection matrix onto the column space of $\bA$. We use $\bP^{\perp}_{\bA} = \bI - \bP_{\bA}$ to denote the projection matrix onto the orthogonal complement of $\Col(\bA)$.

\section{Proposed model}\label{sec:lrccamodel}

We consider two data matrices $\bX_1\in\R^{n\times p_1}$ and $\bX_2\in\R^{n\times p_2}$, where the $n$ rows correspond to matched samples. Similar to \citet{collins2001generalization,li2018general,landgraf2020generalized}, we assume that each data matrix $\bX_k$, $k=1,2$, has a corresponding natural parameter matrix $\bTheta_k\in\R^{n\times p_k}$, and that given the natural parameter matrix the entries are independent with log probability mass or density function for entry $x_{kij}$:
%$$
%\log f(x_{kij}|\theta_{kij})=a_k(\phi_k)\left\{x_{kij}\theta_{kij}-b_k(\theta_{kij})\right\} + c_k(x_{kij}),
%$$
$$
\log f(x_{kij}|\theta_{kij})=x_{kij}\theta_{kij}-b_k(\theta_{kij}) + c_k(x_{kij}),
$$
where $c_k(\cdot)$ does not depend on $\bTheta_k$ and $b_k(\cdot)$ is a convex function. %, and $a_k(\phi_k)$ is a known scaling term. 
The form of each function is determined by the choice of exponential distribution for dataset $k$ (e.g., Gaussian, Binomial, Poisson), and different distributions are allowed for $\bX_1$ and $\bX_2$. Based on motivating datasets, we focus on the Gaussian case and Binomial proportion case. In Gaussian case with variance one, $b_k(\theta_{kij}) = \theta_{kij}^2/2$, with natural parameter corresponding to the mean of the distribution. In Binomial proportion case with $m$ trials, $b_k(\theta_{kij}) =m\log\{1+ \exp(\theta_{kij}/m)\}$, and $\theta_{kij} = m\log\{p_{kij}/(1-p_{kij})\}$, where $p_{kij}$ is probability of success.

To formulate exponential CCA with orthogonal variation, we consider the low-rank model on centered natural parameter matrices. Let $\bTheta_k = \textbf{1}_n\bmu_{k}^\top + \widetilde \bTheta_k$, where $\textbf{1}_n$ is a vector of ones of length $n$ and $\bmu_k\in\R^{p_k}$ is the intercept, so that $\widetilde \bTheta_k$ is the column-centered matrix of natural parameters. Let $r_k =\rank(\widetilde \bTheta_k)$. We assume
\begin{equation}\label{eq:expDecom}
\begin{split}
\widetilde\bTheta_{1}= \bU_{1}\bV_1^\top+\bZ_{1}\bA_1^\top,\quad
 \widetilde\bTheta_{2}= \bU_{2}\bV_2^\top+\bZ_{2}\bA_2^\top;
\end{split}
\end{equation}
where $\bU_1, \bU_2\in\R^{n\times r_0}$ are correlated score matrices such that $\bU_k^{\top}\textbf{1}_n = \bf 0$, $\bU_k^{\top}\bU_k = \bI_{r_0}$, $\bU_1^{\top}\bU_2 = \diag(\rho_1, \dots, \rho_{r_0})$ (capturing $r_0$ correlations between $\bX_1$ and $\bX_2$), and $\bV_k\in\R^{p_k\times r_0}$ are corresponding loading matrices. Furthermore, $\bZ_k\in\R^{n\times (r_k-r_0)}$ capture orthogonal variation in each of the views (such that $\bZ_k^{\top}\bZ_k = \bI_{r_k - r_0}$, $\bZ_1^{\top}\bZ_2 = \bf 0$, $\bZ_k^{\top}(\textbf{1}_n\ \bU_1\  \bU_2) = \bf 0$), and $\bA_k\in\R^{p_k \times (r_k-r_0)}$ capture the loadings corresponding to $\bZ_k$. We refer to $\bJ_k = \bU_k\bV_k^{\top}$ as \textit{joint} signal, and to $\bI_k = \bZ_k\bA_k^{\top}$ as \textit{individual} signal.

\subsection{Connection to classical CCA and model identifiability}\label{sec:normalCCA}

In the Gaussian case, the natural parameter corresponds to the mean of the distribution, thus $\bX_k = \bTheta_k + \bE_k = \textbf{1}_n\bmu_{k}^\top + \widetilde \bTheta_k + \bE_k$, $k=1, 2$, where $\bE_k$ is the error matrix with elements following mean-zero Gaussian distribution. The classical CCA problem  can be viewed as a problem of finding the correlated basis pairs $\bu_{1l}, \bu_{2l}\in \R^n$, $l=1, \dots, r_0$, between column spaces $\Col(\widetilde \bTheta_1)$ and $\Col(\widetilde \bTheta_2)$ \citep{shu2020d}:
\begin{equation}\label{eq:CCA}
\begin{split}
     (\bu_{1l},\bu_{2l})&=\argmax_{\bu_1,\bu_2} \left\{\Cor(\bu_1,\bu_2)\right\}\\
     \text{subject to}& \quad
    \bu_1^{\top}\bu_1 = \bu_2^{\top}\bu_2 = 1, \\
    &\quad \bu_1\in \Col(\widetilde \bTheta_1)\backslash\text{span}(\{\bu_{1i}\}_{i=1}^{l-1}),\quad \bu_2\in \Col(\widetilde \bTheta_2)\backslash\text{span}\left(\{\bu_{2i}\}_{i=1}^{l-1}\right).   
\end{split}
\end{equation}
Each $(\bu_{1l},\bu_{2l})$ is the $l$th pair of canonical variables with corresponding $l$th canonical correlation $\Cor(\bu_{1l},\bu_{2l}) = \bu_{1l}^{\top}\bu_{2l} = \rho_l$. The total number of pairs $r_0$ with non-zero correlation $\rho_l >0$ corresponds to the number of principal angles between $\Col(\widetilde \bTheta_1)$ and $\Col(\widetilde \bTheta_2)$ that are strictly less than 90 degrees \citep{knyazev2002principal}.

Let $r_k = \rank(\widetilde \bTheta_k)$. By definition, the number of canonical pairs satisfies $0\leq r_0 \leq \min(r_1, r_2)$. In case of strict inequality, e.g., $r_0 < r_1$, this implies that the column space of $\widetilde \bTheta_1$ can be decomposed into $r_0$ basis vectors corresponding to canonical variables $\{\bu_{1l}\}_{l=1}^{r_0}$, and the remaining $r_1 - r_0$ basis vectors $\{\bz_{1l}\}_{l=1}^{r_1 - r_0}$ that are orthogonal to both canonical variables and $\Col(\widetilde \bTheta_2)$. Similarly, if $r_0 < r_2$, then $\{\bz_{2l}\}_{l=1}^{r_2 - r_0}$ can be chosen as arbitrary orthogonal basis of $\Col(\widetilde \bTheta_2)/\text{span}(\{\bu_{2l}\}_{l=1}^{r_0})$ so that $\{\bu_{2l}\}_{l=1}^{r_0}, \{\bz_{2l}\}_{l=1}^{r_2 - r_0}$ form an orthogonal basis for $\Col(\widetilde \bTheta_2)$. Theorem 1 in \citet{shu2020d} formalizes the relationship between $\Col(\widetilde \bTheta_1)$ and $\Col(\widetilde \bTheta_2)$ in terms of canonical variables and remaining basis vectors, which we restate below.

\begin{thm}\label{thm:ident}
Let $\bU_1=[\bu_{11},\cdots,\bu_{1r_0}]\in\R^{n\times r_0}$, $\bU_2=[ \bu_{21},\cdots,\bu_{2r_0}]\in\R^{n\times r_0}$ contain canonical variables from~\eqref{eq:CCA}. Let $\bZ_1=[\bz_{11}, \cdots, \bz_{1(r_1-r_0)}]\in\R^{n\times (r_1-r_0)}$ and $\bZ_2=[\bz_{21}, \cdots, \bz_{2(r_2-r_0)}]\in\R^{n\times (r_2-r_0)}$ be matrices of orthogonal basis vectors corresponding to $\mathcal{C}(\bP^{\perp}_{\bU_1}\widetilde \bTheta_1)$ and $\mathcal{C}(\bP^{\perp}_{\bU_2}\widetilde \bTheta_2)$, respectively.  Let $\bQ_1=\bpm \bU_1 & \bZ_1\epm$, $\bQ_2=\bpm \bU_2 &\bZ_2\epm$. Then
\begin{equation*}
\begin{split}
    \bQ_1^{\top}\bQ_1= \bI_{r_1},\quad \bQ_2^{\top}\bQ_2= \bI_{r_2}, \quad
    \bQ_1^{\top}\bQ_2=\bpm \bLambda& \bf 0\\\bf0 &\bf{0}\epm,
\end{split}
 \end{equation*}
where $\bf 0$ is a zero-valued matrix of compatible size, and $\bLambda=\diag(\rho_1,\cdots,\rho_{r_0})$ is a diagonal matrix of canonical correlations. 
\end{thm}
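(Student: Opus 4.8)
The plan is to reduce all three identities to the singular value characterization of principal angles between $\Col(\widetilde\bTheta_1)$ and $\Col(\widetilde\bTheta_2)$. First I would write each Gram matrix in block form according to the partitions $\bQ_k=\bpm\bU_k&\bZ_k\epm$, namely
\begin{equation*}
\bQ_1^\top\bQ_1=\bpm\bU_1^\top\bU_1&\bU_1^\top\bZ_1\\\bZ_1^\top\bU_1&\bZ_1^\top\bZ_1\epm,\qquad
\bQ_1^\top\bQ_2=\bpm\bU_1^\top\bU_2&\bU_1^\top\bZ_2\\\bZ_1^\top\bU_2&\bZ_1^\top\bZ_2\epm,
\end{equation*}
and analogously for $\bQ_2^\top\bQ_2$. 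This isolates which blocks are immediate from the construction and which require the principal angle argument.

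Several blocks follow directly. The diagonal blocks $\bU_k^\top\bU_k=\bI_{r_0}$ and the cross block $\bU_1^\top\bU_2=\bLambda$ are the standard orthonormality and correlation properties of the canonical variables defined by~\eqref{eq:CCA}, while $\bZ_k^\top\bZ_k=\bI_{r_k-r_0}$ holds because $\bZ_k$ is chosen as an orthonormal basis. Since $\bU_k$ is orthonormal, $\bP^{\perp}_{\bU_k}=\bI-\bU_k\bU_k^\top$ and hence $\bU_k^\top\bP^{\perp}_{\bU_k}=\bf0$; as every column of $\bZ_k$ lies in $\Col(\bP^{\perp}_{\bU_k}\widetilde\bTheta_k)$, this gives $\bU_k^\top\bZ_k=\bf0$. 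The dimension $r_k-r_0$ of this space is consistent because $\Col(\bU_k)\subseteq\Col(\widetilde\bTheta_k)$, so projecting off $\bU_k$ removes exactly $r_0$ directions. Together these facts already establish $\bQ_1^\top\bQ_1=\bI_{r_1}$ and $\bQ_2^\top\bQ_2=\bI_{r_2}$.

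The crux is the remaining cross blocks $\bZ_1^\top\bU_2$, $\bU_1^\top\bZ_2$, and $\bZ_1^\top\bZ_2$, all of which I claim vanish; for this it suffices to show $\Col(\bZ_1)\perp\Col(\widetilde\bTheta_2)$ and, symmetrically, $\Col(\bZ_2)\perp\Col(\widetilde\bTheta_1)$. To prove the first, I would fix orthonormal bases $\bB_k$ of $\Col(\widetilde\bTheta_k)$ and take the singular value decomposition $\bB_1^\top\bB_2=\bG_1\bD\bG_2^\top$, whose singular values are the cosines of the principal angles, exactly $r_0$ of which are positive. The principal vectors $\bB_1\bG_1$ and $\bB_2\bG_2$ furnish a valid choice of canonical variables in their leading $r_0$ columns, with $(\bB_1\bG_1)^\top(\bB_2\bG_2)=\bD$ diagonal. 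Because the claimed products depend only on the subspaces $\Col(\bZ_1),\Col(\bZ_2)$ and not on the particular orthonormal basis, I may take $\bZ_1$ to be the trailing $r_1-r_0$ columns of $\bB_1\bG_1$, which span $\Col(\bP^{\perp}_{\bU_1}\widetilde\bTheta_1)$. Then $\bZ_1^\top\bB_2$ equals $\bD\bG_2^\top$ restricted to its trailing rows, which are zero because the associated singular values vanish; hence $\bZ_1\perp\Col(\widetilde\bTheta_2)$, giving $\bZ_1^\top\bU_2=\bf0$ and $\bZ_1^\top\bZ_2=\bf0$. The symmetric argument yields $\bZ_2\perp\Col(\widetilde\bTheta_1)$ and thus $\bU_1^\top\bZ_2=\bf0$, while the leading block of the same decomposition reproduces $\bU_1^\top\bU_2=\bLambda$, assembling the stated form of $\bQ_1^\top\bQ_2$.

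The main obstacle is this cross-block step: it is not enough that each leftover direction in $\Col(\widetilde\bTheta_1)$ be orthogonal to its paired direction in $\Col(\widetilde\bTheta_2)$; I must show orthogonality to the \emph{entire} space $\Col(\widetilde\bTheta_2)$, which is precisely what the simultaneous diagonalization through a single pair of orthogonal factors $\bG_1,\bG_2$ delivers. The care needed is therefore in justifying that the principal vectors produced by this one SVD can legitimately be identified with the canonical variables of~\eqref{eq:CCA} together with their orthogonal complements within each column space, so that the leftover columns genuinely span $\Col(\bP^{\perp}_{\bU_k}\widetilde\bTheta_k)$.
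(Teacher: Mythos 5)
Your argument is correct, but note that the paper itself contains no proof of this statement: Theorem~\ref{thm:ident} is an explicit restatement of Theorem~1 of \citet{shu2020d}, and the paper simply defers to that reference (the appendix only proves Theorems~2 and~3). So the comparison is against the cited argument rather than an in-paper one; your route --- reducing everything to the SVD characterization of principal angles --- is the standard one underlying that result, and all the block computations you give are right. The only step you leave as ``needing care'' is the identification of the greedy solutions of~\eqref{eq:CCA} with the principal vectors, and specifically that $\Col(\bU_1)$ coincides with the span of the left principal vectors attached to the nonzero singular values of $\bB_1^\top\bB_2$, so that your $\bZ_1$ really spans $\Col(\bP^{\perp}_{\bU_1}\widetilde\bTheta_1)$. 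This can be closed cleanly: the singular subspaces of $\bB_1^\top\bB_2$ associated with a fixed singular value are unique, any greedy maximizer at each step of~\eqref{eq:CCA} must lie in the singular subspace for the current largest remaining singular value, and since $\rho_{r_0}>0=\rho_{r_0+1}$ the positive and zero singular values are separated, so every valid choice of canonical variables spans the same $r_0$-dimensional subspace $\bB_1\,\Col(\bB_1^\top\bB_2)$, independent of ties among the positive $\rho_l$. With that observation your trailing columns of $\bB_1\bG_1$ do span $\Col(\bP^{\perp}_{\bU_1}\widetilde\bTheta_1)$, and the rest of your proof goes through; the orthogonality you establish is correctly the stronger statement $\Col(\bZ_1)\perp\Col(\widetilde\bTheta_2)$ (not merely pairwise orthogonality of matched directions), which is exactly what the zero blocks of $\bQ_1^\top\bQ_2$ require.
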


Thus $\bU_1$ and $\bU_2$ capture canonical correlations, whereas $\bZ_1$ and $\bZ_2$ capture orthogonal variation. By construction, $\bQ_1$ is a set of basis of $\Col(\widetilde \bTheta_1)$, and $\bQ_2$ is a set of basis of $\Col(\widetilde \bTheta_2)$. Thus, in the Gaussian case, the proposed model~\eqref{eq:expDecom} encompasses the classical CCA decomposition, with additional explicit modeling of orthogonal variation (through $\bZ_k$).

More generally, we apply the proposed model~\eqref{eq:expDecom} to perform basis decomposition on the matrices of natural parameters in general exponential family framework (and not just in Gaussian case). Since correlations in Gaussian case rely on column-centering, we also formulate our model on column-centered $\widetilde \bTheta_k$, thus original $\bTheta_k$ has an intercept term $\bmu_k$. We further formalize existence of model~\eqref{eq:expDecom} and corresponding identifiability conditions.

\begin{thm}\label{thm:modelident} Given column-centered $\widetilde \bTheta_k$, let $r_k = \rank(\widetilde \bTheta_k)$. Let $r_0$ be the number of non-zero canonical correlations between $\widetilde \bTheta_1$ and $\widetilde \bTheta_2$ according to \eqref{eq:CCA}. 
\begin{enumerate}
    \item There exist $\bU_k$, $\bV_k$, $\bZ_k$, $\bA_k$ such that model~\eqref{eq:expDecom} holds with corresponding conditions.% on $\bU_k$ and $\bZ_k$.
    \item If joint  $\bJ_k=\bU_k\bV^\top_k$ and individual  $\bI_k=\bZ_k\bA^\top_k$ satisfy $ \rank(\bJ_k) + \rank(\bI_k)= \rank(\widetilde \bTheta_k)$, then $\bJ_k$ and $\bI_k$  are unique. Furthermore, if the canonical correlations are distinct, then $\bU_k$ are unique up to a sign. If both $\widetilde{\bZ}_k$ and $\bZ_k$ satisfy the conditions, then there exists an orthogonal matrix $\bQ_k \in \R^{(r_k - r_0) \times (r_k - r_0)}$ such that $\widetilde{\bZ}_k =  \bZ_k\bQ_k$. 
\end{enumerate}
\end{thm}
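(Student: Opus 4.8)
The plan is to obtain existence directly from Theorem~\ref{thm:ident} and to obtain uniqueness by combining the rank condition with the extremal characterization of canonical correlations.

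For existence (part 1), I would take $\bU_k$ and $\bZ_k$ to be exactly the matrices constructed in Theorem~\ref{thm:ident}, so that $\bQ_k = \bpm \bU_k & \bZ_k\epm$ is an orthonormal basis of $\Col(\widetilde\bTheta_k)$. Reading the required orthogonality conditions off the three identities $\bQ_1^\top\bQ_1 = \bI_{r_1}$, $\bQ_2^\top\bQ_2 = \bI_{r_2}$, and $\bQ_1^\top\bQ_2 = \bpm\bLambda&\bf 0\\\bf 0&\bf 0\epm$ gives $\bU_k^\top\bU_k = \bI_{r_0}$, $\bZ_k^\top\bZ_k = \bI_{r_k-r_0}$, $\bU_k^\top\bZ_k = \bf 0$, $\bU_1^\top\bU_2 = \bLambda$, $\bZ_1^\top\bZ_2 = \bf 0$, and the cross terms $\bU_1^\top\bZ_2 = \bU_2^\top\bZ_1 = \bf 0$. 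Since every column of $\bU_k$ and $\bZ_k$ lies in $\Col(\widetilde\bTheta_k)$, which is orthogonal to $\textbf{1}_n$ by column-centering, the centering conditions $\bU_k^\top\textbf{1}_n = \bZ_k^\top\textbf{1}_n = \bf 0$ also hold, so $\bZ_k^\top(\textbf{1}_n\ \bU_1\ \bU_2) = \bf 0$. Setting $\bV_k = \widetilde\bTheta_k^\top\bU_k$ and $\bA_k = \widetilde\bTheta_k^\top\bZ_k$ then yields $\bU_k\bV_k^\top + \bZ_k\bA_k^\top = \bQ_k\bQ_k^\top\widetilde\bTheta_k = \widetilde\bTheta_k$, because $\bQ_k\bQ_k^\top$ is the orthogonal projector onto $\Col(\widetilde\bTheta_k)$, which verifies model~\eqref{eq:expDecom}.

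For uniqueness (part 2), the first move is to exploit the rank condition. Since $\bU_k$ has orthonormal columns, $\rank(\bJ_k) = \rank(\bV_k)\le r_0$ and likewise $\rank(\bI_k) = \rank(\bA_k)\le r_k - r_0$; the hypothesis $\rank(\bJ_k) + \rank(\bI_k) = r_k$ then forces both bounds to be tight, so $\Col(\bJ_k) = \Col(\bU_k)$ and $\Col(\bI_k) = \Col(\bZ_k)$. Combined with $\bU_k^\top\bZ_k = \bf 0$ and $\Col(\widetilde\bTheta_k)\subseteq\Col(\bU_k)+\Col(\bZ_k)$, a dimension count gives the orthogonal direct sum $\Col(\widetilde\bTheta_k) = \Col(\bU_k)\oplus\Col(\bZ_k)$; in particular $\Col(\bU_k),\Col(\bZ_k)\subseteq\Col(\widetilde\bTheta_k)$. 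Next I would identify $\Col(\bU_k)$ with the canonical subspace. Writing $\bU_k = \bB_k\bW_k$ for an orthonormal basis $\bB_k$ of $\Col(\widetilde\bTheta_k)$ and $\bW_k^\top\bW_k = \bI_{r_0}$, the constraint becomes $\bW_1^\top\bM\bW_2 = \bLambda$ with $\bM = \bB_1^\top\bB_2$, whose nonzero singular values are exactly the $r_0$ canonical correlations $\rho_1,\dots,\rho_{r_0}$. Taking traces and invoking the Ky Fan / von Neumann inequality $\Tr(\bW_1^\top\bM\bW_2)\le\sum_{l=1}^{r_0}\sigma_l(\bM)$ shows the decomposition attains the maximum, which, because of the strict gap $\sigma_{r_0} > 0 = \sigma_{r_0+1}$, forces $\Col(\bW_k)$ to be the top-$r_0$ singular subspace and hence $\Col(\bU_k)$ to be the canonical subspace, uniquely determined.

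With $\Col(\bU_k)$ fixed, uniqueness of the signals is immediate: applying $\bP_{\bU_k} = \bU_k\bU_k^\top$ to $\widetilde\bTheta_k$ and using $\bU_k^\top\bZ_k = \bf 0$ gives $\bJ_k = \bP_{\bU_k}\widetilde\bTheta_k$, which depends only on $\Col(\bU_k)$, and $\bI_k = \widetilde\bTheta_k - \bJ_k$; both are therefore unique. When the $\rho_l$ are distinct, the left and right singular vectors of $\bM$ are unique up to sign, so the columns of $\bU_k$ are determined up to a column-wise sign, with the signs of $\bu_{1l}$ and $\bu_{2l}$ coupled to keep $\bu_{1l}^\top\bu_{2l} = \rho_l > 0$. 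Finally, any admissible $\widetilde\bZ_k$ and $\bZ_k$ are orthonormal bases of the same subspace $\Col(\bI_k)$, namely the orthogonal complement of $\Col(\bU_k)$ within $\Col(\widetilde\bTheta_k)$, so they are related by an orthogonal $\bQ_k\in\R^{(r_k-r_0)\times(r_k-r_0)}$, i.e.\ $\widetilde\bZ_k = \bZ_k\bQ_k$. I expect the main obstacle to be the step pinning down $\Col(\bU_k)$: one must argue that merely requiring $\bU_1^\top\bU_2$ to be diagonal with the canonical correlations on the diagonal (rather than maximizing any explicit objective) already forces the canonical subspace, which is precisely where the extremal characterization and the strict singular-value gap at $\sigma_{r_0}$ do the real work.
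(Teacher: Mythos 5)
Your proof is correct, and the existence argument (taking $\bU_k,\bZ_k$ from Theorem~\ref{thm:ident}, setting $\bV_k=\widetilde\bTheta_k^\top\bU_k$, $\bA_k=\widetilde\bTheta_k^\top\bZ_k$, and using $\bQ_k\bQ_k^\top\widetilde\bTheta_k=\widetilde\bTheta_k$) as well as the rank-condition step forcing $\Col(\bJ_k)=\Col(\bU_k)$ and $\Col(\bI_k)=\Col(\bZ_k)$ coincide with the paper's proof. Where you diverge is the core identification step. The paper does not argue directly that $\Col(\bU_k)$ is the canonical subspace; instead it cites Propositions~1--2 of \citet{gaynanova2019structural}, which assert that a decomposition $\widetilde\bTheta_k=\bJ_k+\bI_k$ satisfying the orthogonality conditions, the requirement that all principal angles between $\Col(\bJ_1)$ and $\Col(\bJ_2)$ be strictly below $\pi/2$, and the rank-additivity condition is unique; uniqueness of $\bJ_k,\bI_k$ then follows because any admissible model instance satisfies these three properties. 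You instead give a self-contained extremal argument: writing $\bU_k=\bB_k\bW_k$ and observing that $\bU_1^\top\bU_2=\bLambda$ forces $\Tr(\bW_1^\top\bM\bW_2)=\sum_{l=1}^{r_0}\sigma_l(\bM)$, so equality in the Ky Fan/von Neumann bound together with the gap $\sigma_{r_0}>0=\sigma_{r_0+1}$ pins $\Col(\bW_k)$ to the top singular subspace. This is valid (the equality case does force $\|\bW_1^\top u_i\|=\|\bW_2^\top v_i\|=1$ for all $i\le r_0$, hence containment and then equality of subspaces by dimension count), and it buys a proof with no external dependency that makes transparent exactly why the canonical subspace is forced. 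The trade-off is that your argument leans on the diagonal of $\bU_1^\top\bU_2$ equaling the actual canonical correlations so that the trace saturates the bound; the paper's cited result only needs the weaker fact that the diagonal entries are positive (principal angles below $\pi/2$), so it covers a nominally larger class of admissible decompositions. Your final steps match the paper in substance: your ``singular vectors unique up to coupled signs'' claim is exactly the paper's observation that $\bR_1^\top\bLambda\bR_2=\bLambda$ with distinct $\rho_l$ forces $\bR_1=\bR_2=\diag(\pm1,\dots,\pm1)$ (Autonne's uniqueness theorem), and the change-of-basis argument for $\bZ_k$ is identical.
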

The proof is in Web Appendix A.

\subsection{Connection to other existing decompositions}

The existence and identifiability conditions of proposed ECCA model are similar to the conditions for Decomposition-based Canonical Correlation Analysis (DCCA) \citep{shu2020d}. However, ECCA has two important differences with DCCA. First, DCCA decomposes $\bU_1$ and $\bU_2$ into common $\bC$ and orthogonal $\bD_1$, $\bD_2$, and estimates those components separately rather than estimating $\bU_d$ directly like ECCA does. Secondly, DCCA is restricted to the Gaussian case, where the corresponding estimates have closed-form. In contrast, ECCA considers a more general exponential family framework for which closed-form solutions do not exist, presenting significant optimization challenges that we address in this work. 

Exponential PCA methods \citet{collins2001generalization, landgraf2020generalized} consider low-rank decomposition of matrix of natural parameters separately for each data set, and thus do not provide answers to which signals are correlated and which signals are unique across datasets. Generalized Association Study (GAS) \citep{li2018general} also considers decomposition of natural parameter matrices under exponential family framework into joint and individual parts, however the definition of joint and individual are different compared to ECCA. In GAS, the joint parts have zero principal angles (all canonical correlations are one), whereas the individual parts are non-intersecting, but not necessarily orthogonal. For example, two canonical variables with canonical correlation of 0.8 belong to individual parts of the decomposition under GAS model, but belong to joint part of decomposition under ECCA model. Thus, GAS and ECCA agree on their treatment of canonical correlations that are exactly one or exactly zero, but disagree on canonical correlations that are strictly between 0 and 1. ECCA's treatment of those as joint is consistent with standard CCA. Furthermore, unlike GAS, individual signals in ECCA are orthogonal to each other, meaning that those signals can be interpreted as view-specific information completely absent from another view. The differences in GAS and ECCA decompositions translate into significant differences in underlying optimization problems and corresponding algorithms, as additional orthogonality constraints in ECCA model present considerable challenges, which we address here with the help of a splitting method \citep{Lai:2014dq}.

\section{Estimation}\label{sec:estimation}
\subsection{Overview}\label{sec:parameter}
Let $L(\bTheta_k|\bX_k)$ be the negative log-likelihood associated with natural parameter matrix $\bTheta_k$ given the data matrix. 
$$
L(\bTheta_k|\bX_k) = -\sum_{i=1}^{n}{\sum_{j=1}^{p_k}\log f(x_{kij}|\theta_{kij})} = \sum_{i=1}^{n}\sum_{j=1}^{p_k}\left\{- x_{kij}\theta_{kij} + b_k(\theta_{kij}) \right\} + C,
$$
where $C$ is a constant independent from $\bTheta_k$.
In Gaussian case with variance 1 (Section~\ref{sec:lrccamodel})
$$
L(\bTheta_k|\bX_k) = \frac{1}{2}\|\bX_k-\bTheta_k\|^2_F + C,
$$
and in Binomial proportion case with $m$ trials
$$
L(\bTheta_k|\bX_k) = m\sum_{i = 1}^{n}\sum_{j = 1}^{p}\Big[-x_{kij}(\theta_{kij}/m) + \log\{1+\exp(\theta_{kij}/m)\}\Big] + C.
$$
Observe that the number of trials $m$ enters the likelihood as a multiplier and as a scaling term on $\bTheta_k$. Since the scaling does not affect the model decomposition, the choice of $m$ can be viewed as a relative weight assigned to view $k$.

Given ranks $r_0$, $r_1$ and $r_2$, we propose to fit model~\eqref{eq:expDecom} by minimizing sum of negative log-likelihoods associated with $\bX_1$ and $\bX_2$, accounting for centering and model constraints: 
\begin{equation}\label{eq:finalform}
\begin{split}
\minimize_{\bmu_k, \bU_k, \bV_k,\bZ_k,\bA_k}&\left\{L(\textbf{1}_n\bmu_{1}^\top + \bU_{1}\bV_1^\top+\bZ_{1}\bA_1^\top|\bX_1) + L(\textbf{1}_n\bmu_{2}^\top + \bU_{2}\bV_2^\top+\bZ_{2}\bA_2^\top|\bX_2)\right\} \\
\mbox{subject to}& \quad\bU_k^{\top}\textbf{1}_n = {\bf 0},\quad \bU_k^{\top}\bU_k = \bI_{r_0},\quad \bU_1^{\top}\bU_2 = \diag(\rho_1, \dots, \rho_{r_0}),\\
&\quad \bZ_k^{\top}(\textbf{1}_n\ \bU_1\  \bU_2) = {\bf 0},\quad \bZ_k^{\top}\bZ_k = \bI_{r_k - r_0}, \quad\bZ_1^{\top}\bZ_2 = {\bf 0}, \quad k=1,2.
\end{split}
\end{equation}
We discuss rank selection approaches in Section~\ref{sec:rank}.

To optimize~\eqref{eq:finalform}, we propose to use alternating updates over $\bmu_k, \bU_k, \bV_k, \bZ_k, \bA_k$ as summarized in Algorithm~\ref{a:full}. Each update corresponds to its own non-trivial optimization problem due to combination of (possibly) non-Gaussian likelihood $L(\bTheta_k|\bX_k)$ and the orthogonality constraints in~\eqref{eq:finalform}. We propose to use damped Newton's method to update the unconstrained model parameters (intercept $\bmu_k$ and loading matrices $\bV_k$, $\bA_k$). For constrained model parameters, we derive modifications of splitting orthogonality constraints (SOC) and Bregman iteration method \citep{yin2008bregman,Lai:2014dq}. 
Below we provide high-level overview of each update, additional details are in Web Appendix B.

\begin{algorithm}[!t]
\caption{ECCA algorithm}\label{a:full}
\begin{algorithmic}[1]
\Require Initial values $\bU_1^{(0)}, \bU_2^{(0)}, \bZ_1^{(0)}, \bZ_2^{(0)}$, ranks $r_0, r_1, r_2$,  $t_{\max}$, $\epsilon$
\State $t\gets 0$
\State Calculate starting negative log-likelihood $L^{(0)}$
\While{$|L^{(t)} - L^{(t-1)}| > \epsilon$ and $t < t_{\max}$}
\State $t \gets t+1$
\State Update of loadings: solve for $\bmu_k^{(t)}$, $\bV_k^{(t)}$, $\bA_k^{(t)}$, $k= 1, 2$
\State Update of orthogonal scores: solve for $\bZ_1^{(t)}$,  $\bZ_2^{(t)}$
\State Update of correlated scores: solve for $\bU_1^{(t)}$, $\bU_2^{(t)}$
\State Rotation of correlated scores: update $\bU_k^{(t)}$ and $\bV_k^{(t)}$, $k=1, 2$
\State Calculate updated negative log-likelihood $L^{(t)}$
\EndWhile
\State \Return {$\bmu_k^{(t)}, \bU_k^{(t)}, \bV_k^{(t)}, \bZ_k^{(t)}, \bA_k^{(t)}, k=1, 2$}
\end{algorithmic}
\end{algorithm}

\subsection{Update of loadings}
\label{s:loading_update}
Given $\bU_k$, $\bZ_k$, $k=1, 2$, the update of loadings in~\eqref{eq:finalform} can be separated across $k$, leading to two separate optimization problems of the same form:
\begin{equation}\label{eq:loadings}
(\bmu^*_k,\bV^*_k,\bA^*_k) = \argmin_{\bmu_k,\bV_k,\bA_k}{L(\textbf{1}_n\bmu_{k}^\top + \bU_{k}\bV_k^\top+\bZ_{k}\bA_k^\top|\bX_k)}.
\end{equation}
Since $L(\bTheta_k|\bX_k)$ is differentiable and~\eqref{eq:loadings} has no constraints, we propose to use damped Newton's method for optimization. For example, the update for $\bmu$ takes the form
\begin{equation}
\label{eq:mu_update}
\bmu^{+} = \bmu - t(\nabla_{\bmu}^2 L)^{-1}\nabla_{\bmu} L,
\end{equation}
where $\bmu$ is the current value, $t \in (0, 1)$ is the step size, $\nabla_{\bmu} L$ is the gradient evaluated at current value $\bmu$, and $\nabla_{\bmu}^2 L$ is the Hessian evaluated at current value $\bmu$. We choose the step size by backtracking line search \citep{wright1999numerical}, and use the difference in objective function values to monitor the convergence. 

In the special case that view $k$ follows Gaussian distribution,~\eqref{eq:loadings} has closed form solution. Let $\bS_k = (\textbf{1}_n\ \bU_k\ \bZ_k) \in \mathbb{R}^{n\times (1+r_1)}$ and $\bT_k = (\bmu_k\ \bV_k\ \bA_k) \in \mathbb{R}^{p\times (1+r_1)}$. Then
$$
L(\textbf{1}_n\bmu_{k}^\top + \bU_{k}\bV_k^\top+\bZ_{k}\bA_k^\top|\bX_k) = \frac{1}{2}\|\bX_k-\bS_k\bT_k^{\top}\|^2_F + C,
$$
thus $(\bmu^*_k\ \bV^*_k\ \bA^*_k) = (\bS_k^+\bX_k)^{\top}$, where $\bS_k^+$ is the Moore - Penrose inverse of $\bS_k$.

\subsection{Update of orthogonal scores}
\label{s:ind_update}
Given $\bmu_k$, $\bA_k$, $\bV_k$ and $\bU_k$, let $\bB_k = \textbf{1}_n\bmu_{k}^\top + \bU_{k}\bV_k^\top$. Then the update of orthogonal score matrices $\bZ_1$ and $\bZ_2$ corresponds to the following problem
\begin{equation}\label{eq:ZSOC}
\begin{split}
\minimize_{\bZ_1,\bZ_2}&\left\{L(\bB_1+\bZ_{1}\bA_1^\top|\bX_1) + L(\bB_2+\bZ_{2}\bA_2^\top|\bX_2)\right\} \\
\text{subject to }& \bpm\textbf{1}_n&\bU_1&\bU_2\epm ^\top\bpm\bZ_1&\bZ_2\epm=\bf{0}, \quad \bpm\bZ_1&\bZ_2\epm^\top \bpm\bZ_1&\bZ_2\epm = \bI.
\end{split}
\end{equation}
Problem \eqref{eq:ZSOC} has convex objective function with respect to $(\bZ_1, \bZ_2)$ and nonconvex orthogonality constraints.
To solve this problem, we adapt the SOC method, a Splitting method for Orthogonality Constrained problems \citep{Lai:2014dq}.

We introduce the auxiliary matrix $(\bP_1, \bP_2)$ and reformulate~\eqref{eq:ZSOC} as
\begin{equation}\label{eq:ZSOC2}
\begin{split}
\min_{\bZ_1,\bZ_2,\bP_1,\bP_2}&\left\{L(\bB_1+\bZ_{1}\bA_1^\top|\bX_1) + L(\bB_2+\bZ_{2}\bA_2^\top|\bX_2)\right\} \\
\text{subject to }&\bpm\bZ_1&\bZ_2\epm = \bpm \bP_1&\bP_2\epm, \\
&\bpm\textbf{1}_n&\bU_1&\bU_2\epm ^\top\bpm\bP_1&\bP_2\epm=\bf{0}, \quad\bpm\bP_1&\bP_2\epm^\top \bpm\bP_1&\bP_2\epm = \bI.
\end{split}
\end{equation}
The purpose of auxiliary matrix is to separate the objective function minimization from orthogonality constraints. Algorithm~\ref{a:ZSOC} summarizes SOC updates applied to~\eqref{eq:ZSOC2}, see Web Appendix B for derivation. As the updates for $\bZ_k$ are unconstrained, we utilize updates from Section~\ref{s:loading_update}. We use the primal residuals, $(\bZ_1^{(t+1)}, \bZ_2^{(t+1)}) - (\bP_1^{(t+1)}, \bP_2^{(t+1)})$,  and the dual residuals, $(\bP_1^{(t+1)}, \bP_2^{(t+1)}) - (\bP_1^{(t)}, \bP_2^{(t)})$, to monitor the convergence \citep{Boyd:2011bw}.

\begin{algorithm}[!t]
\caption{Splitting orthogonal constraint algorithm for~\eqref{eq:ZSOC2}}\label{a:ZSOC}
\begin{algorithmic}[1]
\Require Given: $t=0$, $\bZ_1^{(0)}$, $\bZ_2^{(0)}$, $\bU=(\textbf{1}_n,\bU_1,\bU_2), t_{max}$
\State Initialize $\bP_1^{(0)} \gets \bZ_1^{(0)},\bP_2^{(0)} \gets \bZ_2^{(0)},\bB_1^{(0)} \gets 0,\bB_2^{(0)} \gets 0$
\While{$t \neq t_{max}$ and `not converge'}
\State $t \gets t+1$;
\State $\bZ_1^{(t)} \gets \argmin_{\bZ_1}L(\bTheta_1|\bX_1) + \frac{\gamma}2\|\bZ_1-\bP^{(t-1)}_1+\bB_1^{(t-1)}\|_F^2$. 
\State $\bZ_2^{(t)} \gets \argmin_{\bZ_2}L(\bTheta_2|\bX_2)  +\frac{\gamma}2\|\bZ_2-\bP_2^{(t-1)}+\bB^{(t-1)}_2\|_F^2$. 
\State Compute SVD of $(\bI-\bU\bU^{+})(\bZ_1^{(t)}+\bB_1^{(t-1)},\bZ_2^{(t)}+\bB_2^{(t-1)})=\bM\bD\bN^\top$.
\State $(\bP_1^{(t)},\bP^{(t)}_2)\gets\bM\bN^\top$.
\State $\bB_1^{(t)} \gets\bB_1^{(t-1)} + \bZ_1^{(t)} -\bP_1^{(t)}.$
\State $\bB_2^{(t)} \gets\bB_2^{(t-1)} + \bZ_2^{(t)} -\bP_2^{(t)}.$
\EndWhile
\State \Return {$\bZ_k^{(t)}, \bP_k^{(t)}, \bB_k^{(t)}, k=1, 2$}
\end{algorithmic}
\end{algorithm}

The parameter $\gamma$ can be interpreted as the inverse step size. The larger is $\gamma$, the more likely the algorithm converges, but takes more iterations. The smaller is $\gamma$, the larger are the steps, but the algorithm may fail to converge. By default, we use $\gamma = 1000$ which led to convergence in all our numerical studies. \citet{Lai:2014dq} shows empirically that the algorithm is guaranteed to converge when $\gamma$ is chosen sufficiently large, however provide no theoretical guarantees. Below we establish such guarantees for Algorithm~\ref{a:ZSOC} for Binomial/Gaussian and Binomial/Binomial cases by taking advantage of the results of \citet{wang2019global} on convergence of ADMM in non-convex problems.

\begin{thm}\label{thm:SOCconvergence} 
If data matrices $\bX_1, \bX_2$ follow Gaussian or Binomial-proportion distribution, then for sufficiently large $\gamma$, the sequence $(\bZ^{(t)} , \bP^{(t)} , \bB^{(t)})$ generated by SOC Algorithm~\ref{a:ZSOC} has at least one limit point, and each limit point is a stationary point of the augmented Lagrangian.
\end{thm}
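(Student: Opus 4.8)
The plan is to recognize Algorithm~\ref{a:ZSOC} as a two-block ADMM for the split problem~\eqref{eq:ZSOC2} and to verify the hypotheses of the nonconvex ADMM convergence theory of \citet{wang2019global}. Collect the blocks as $\bZ=(\bZ_1,\bZ_2)$ and $\bP=(\bP_1,\bP_2)$, write $\ell_k(\bZ_k)$ for the smooth negative log-likelihood $L(\textbf{1}_n\bmu_k^\top+\bU_k\bV_k^\top+\bZ_k\bA_k^\top|\bX_k)$ viewed as a function of $\bZ_k$ with the remaining parameters fixed, and let $\mathcal{M}=\{\bP:(\textbf{1}_n\ \bU_1\ \bU_2)^\top\bP=\mathbf{0},\ \bP^\top\bP=\bI\}$. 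With $\bB=(\bB_1,\bB_2)$ the scaled dual variable, the associated augmented Lagrangian is
$$
\mathcal{L}_\gamma(\bZ,\bP,\bB)=\ell_1(\bZ_1)+\ell_2(\bZ_2)+\delta_{\mathcal{M}}(\bP)+\tfrac{\gamma}{2}\|\bZ-\bP+\bB\|_F^2-\tfrac{\gamma}{2}\|\bB\|_F^2,
$$
where $\delta_{\mathcal{M}}$ is the indicator of $\mathcal{M}$. Lines 4--5, 6--7 and 8--9 of Algorithm~\ref{a:ZSOC} are then exactly the $\bZ$-minimization (smooth and unconstrained), the $\bP$-minimization (Euclidean projection onto $\mathcal{M}$, realized by the SVD), and the dual ascent. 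Because the coupling constraint $\bZ=\bP$ has identity coefficient matrices, the image/range condition of \citet{wang2019global} holds automatically.

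First I would check the structural assumptions. Coercivity is immediate: $\mathcal{M}$ is a Stiefel manifold intersected with a linear subspace, hence a compact embedded submanifold, so the feasible set $\{\bZ=\bP\in\mathcal{M}\}$ is compact and the primal objective is bounded on it. The same smoothness and compactness give prox-regularity of $\delta_{\mathcal{M}}$, which is the regularity class the nonsmooth block must belong to. The decisive assumption is global Lipschitz continuity of $\nabla_{\bZ}\ell_k$, and this is exactly where the distributional restriction is used: since $\bTheta_k$ is affine in $\bZ_k$, the Hessian of $\ell_k$ is governed by $b_k''$; in the Gaussian case $b_k''\equiv 1$ so $\ell_k$ is quadratic, and in the Binomial-proportion case $b_k''(\theta)=\tfrac1m\sigma(\theta/m)\{1-\sigma(\theta/m)\}\le 1/(4m)$ is uniformly bounded, so $\nabla_{\bZ}\ell_k$ is globally Lipschitz (this bound fails for, e.g., Poisson, which explains the scope of the theorem). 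Convexity of $b_k$ additionally makes each $\ell_k$ convex, so for $\gamma$ large the $\bZ$-subproblem is strongly convex with a unique, well-defined minimizer.

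With these verified, I would run the standard nonconvex-ADMM argument. Using the $\bZ$-optimality condition one obtains the identity $\gamma\bB^{(t+1)}=-\nabla_{\bZ}\ell(\bZ^{(t+1)})-\gamma(\bP^{(t+1)}-\bP^{(t)})$, which together with the global Lipschitz constant of $\nabla_{\bZ}\ell$ lets one bound the dual increments $\|\bB^{(t+1)}-\bB^{(t)}\|_F$ by primal increments. Feeding this into the augmented-Lagrangian decrease inequality shows that, for $\gamma$ sufficiently large, $\mathcal{L}_\gamma$ decreases monotonically by an amount of order $\|\bZ^{(t+1)}-\bZ^{(t)}\|_F^2+\|\bP^{(t+1)}-\bP^{(t)}\|_F^2$; since $\mathcal{L}_\gamma$ is bounded below along the iterates, the successive differences are summable and hence vanish, and the iterates are bounded, so at least one limit point exists. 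Finally, passing to the limit along a convergent subsequence in the three optimality relations --- the $\bZ$-stationarity $\nabla_{\bZ}\ell(\bZ^{(t+1)})+\gamma(\bZ^{(t+1)}-\bP^{(t)}+\bB^{(t)})=\mathbf{0}$, the $\bP$-inclusion $\gamma(\bZ^{(t+1)}-\bP^{(t+1)}+\bB^{(t)})\in N_{\mathcal{M}}(\bP^{(t+1)})$, and the feasibility residual $\bZ^{(t+1)}-\bP^{(t+1)}\to\mathbf{0}$ --- and using outer semicontinuity of the normal cone (from prox-regularity of $\mathcal{M}$) yields $0\in\partial\mathcal{L}_\gamma$ at the limit point, i.e.\ it is a stationary point of the augmented Lagrangian.

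The hard part will be the two regularity verifications that make \citet{wang2019global} applicable in this nonconvex, nonsmooth setting: establishing the uniform Hessian bound for the Binomial negative log-likelihood (and thereby the global Lipschitz gradient that the dual-increment control requires), and handling the fact that in Algorithm~\ref{a:ZSOC} the nonsmooth block $\bP$ is updated last while the Lipschitz-differentiable block $\bZ$ is updated first. The latter means the multiplier is pinned down through the normal-cone inclusion rather than through a smooth gradient, so the dual-increment bound must be extracted by combining the $\bZ$-optimality identity above with the prox-regularity and compactness of $\mathcal{M}$; obtaining a clean summable-error estimate out of this coupling, uniformly in $t$, is the main technical obstacle.
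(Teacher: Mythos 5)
Your proposal is correct and follows essentially the same route as the paper: both reduce the theorem to Corollary 2 of \citet{wang2019global} by verifying (i) compactness of the constraint set $\{\bM : (\textbf{1}_n\ \bU_1\ \bU_2)^\top\bM = \mathbf{0},\ \bM^\top\bM = \bI\}$ as a closed subset of the bounded Stiefel manifold, and (ii) global Lipschitz differentiability of the negative log-likelihood in $\bZ$ via the uniform Hessian bound $\|\bA^\top\diag(b''(\theta_{i1}),\dots)\bA\|_{op}\le\|\bA^\top\|_{op}\|\bA\|_{op}$, using $b''\equiv 1$ (Gaussian) and $0\le b''\le 1/(4m)$ (Binomial proportion). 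The only difference is that you re-derive the internal ADMM descent/summability machinery that the paper delegates entirely to the cited corollary, which does not change the substance of the argument.
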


In the special case that both exponential families are Gaussian, the solution has closed form. Let
$\bY_k = \bX_k - \textbf{1}_n\bmu^\top_k - \bU_k\bV^\top_k, k = 1,2$, 
$\bY = (\bY_1,\bY_2)$, $\bZ = (\bZ_1,\bZ_2)$, 
$\bU = (\textbf{1}_n, \bU_1, \bU_2)$, 
and let $\bA$ be a block-diagonal matrix with blocks $\bA_1$, $\bA_2$. Then problem~\eqref{eq:ZSOC} becomes
\begin{equation}\label{eq:ZSOC-Gaussian}
\begin{split}
\minimize_{\bZ_1,\bZ_2}& \ {\frac{1}{2}\|\bY - \bZ\bA^\top\|_F^2}, \\
\text{subject to }&\quad \bU^\top\bZ = \mathbf{0},\quad \bZ^\top\bZ = \bI.
\end{split}
\end{equation}
It can be shown (Web Appendix B), that the optimal solution is
$
\bZ^* = \bQ\bR^\top,
$
where $\bQ$ and $\bR$ are matrices of singular vectors from the short SVD factorization $(\bI-\bU\bU^{+})\bY\bA=\bQ\bD\bR^\top$. 
\subsection{Update of correlated scores}
\label{s:joint_update}
Given $\bmu_k$, $\bA_k$, $\bV_k$ and $\bU_k$, let $\bB_k = \textbf{1}_n\bmu_{k}^\top + \bZ_{k}\bA_k^\top$. The update of correlated score matrices $\bU_1$ and $\bU_2$ corresponds to the following problem
\begin{equation}\label{eq:USOClambda}
\begin{split}
\minimize_{\bU_1,\bU_2}& \ \{L(\bB_1+\bU_{1}\bV_1^\top|\bX_1) + L(\bB_2+\bU_{2}\bV_2^\top|\bX_2)\} \\
\text{subject to }& \bpm\textbf{1}_n&\bZ_1&\bZ_2\epm ^\top\bpm\bU_1&\bU_2\epm={\bf{0}}, \quad\bU_1^\top\bU_1 = \bU_2^\top\bU_2 = \bI, \quad \bU_1^{\top}\bU_2 = \bLambda.
\end{split}
\end{equation}
The key difference in this problem compared to~\eqref{eq:ZSOC} is that $\bU_1^{\top}\bU_2$ is required to be a diagonal matrix with positive entries (corresponding to correlations), that is $\bU_1^\top \bU_2=\bLambda$. Note, however, that if $\bU_1^\top \bU_2$ is non-diagonal, but full rank, one can apply rotations $\bGamma_1$ to $\bU_1$ and $\bGamma_2$ to $\bU_2$ so that $\bGamma_1^{\top}\bU_1^\top \bU_2\bGamma_2$ is diagonal. For rotations it holds that $\bU_1\bV_1^{\top} = \bU_1\bGamma_1\bGamma_1^{\top}\bV_1^{\top}$, thus the corresponding rotation on loadings $\bV_1$ keeps the objective value of $L(\bB_1+\bU_{1}\bV_1^\top|\bX_1)$ unchanged. Therefore, we drop diagonal constraint from~\eqref{eq:USOClambda}, and perform extra rotation of scores $\bU_k$ and loadings $\bV_k$ in Section~\ref{s:normalize}.

Rewriting~\eqref{eq:USOClambda} without diagonal constraints separates the problem across $k = 1,2$, leading to two separate optimization problems of the same form:
\begin{equation}\label{eq:USOC}
\begin{split}
\minimize_{\bU_k}&\{ L(\bB_k+\bU_{k}\bV_k^\top|\bX_k)\} \\
\text{subject to }& \bpm\textbf{1}_n&\bZ_1&\bZ_2\epm ^\top\bU_k={\bf{0}}, \quad\bU_k^\top\bU_k = \bI.
\end{split}
\end{equation}

As in Section~\ref{s:ind_update}, we adapt SOC algorithm to solve~\eqref{eq:USOC}, with Gaussian case having the closed-form solution. See Web Appendix B.

\subsection{Rotation of correlated scores and loadings}
\label{s:normalize}
Let $\widetilde\bU_k$, $k=1,2$, be the score matrices obtained from solving~\eqref{eq:USOC}, which may not satisfy the regularity condition $\widetilde \bU_1^\top \widetilde \bU_2=\bLambda$. Let $\widetilde \bV_k$ be the corresponding loading matrices. Further we show how to construct rotations $\bGamma_k$ ($\bGamma_k\bGamma_k^{\top} = \bGamma_k^{\top}\bGamma_k = \bI$) so that setting $\bU_k = \widetilde \bU_k\bGamma_k$ leads to $\bU_1^{\top}\bU_2 = \bLambda$, and the matrix $\widetilde \bU_k\widetilde \bV^\top_k = \bU_k\bV^\top_k$ remains unchanged with $\bV_k = \widetilde\bV_k\bGamma_k$.

Let $\bGamma_1$ and $\bGamma_2$ be matrices of left and right singular vectors, respectively, from the singular value decomposition 
$\widetilde \bU^\top_1 \widetilde \bU_2 = \bGamma_1 \bLambda\bGamma_2^\top,$ where $\bLambda$ is a diagonal matrix of nonnegative singular values. Let $\bU_k = \widetilde \bU_k\bGamma_k$ and $\bV_k = \widetilde\bV_k\bGamma_k$. Then by construction
\begin{align*}
 &\bU^\top_1 \bU_2 = (\widetilde \bU_1\bGamma_1)^\top(\widetilde \bU_2\bGamma_2) = \bLambda,\quad
\bU^\top_k \bU_k = (\widetilde \bU_k\bGamma_k)^\top(\widetilde \bU_k\bGamma_k) = \bI,\\
&\bU_k \bV^\top_k = (\widetilde \bU_k\bGamma_k)(\widetilde \bV_k\bGamma_k)^\top = \widetilde \bU_k\widetilde \bV^\top_k.
\end{align*}
Thus, the rotated score matrices $\bU_k$ and loading matrices $\bV_k$ satisfy all the regularity conditions. Furthermore, the likelihood stays the same.

\subsection{Initialization}\label{sec:init}
The proposed ECCA Algorithm~\ref{a:full} requires the initial score matrices $\bU_k^{(0)},\bZ_k^{(0)}$. Our default initialization is based on saturated natural parameters $\widehat{\bTheta}_k$ obtained from $\bX_k$ as maximum likelihood estimators without any constraints. In Gaussian case, $\widehat{\bTheta}_k=\bX_k$. In Binomial proportion case, if there are any zeros or ones in $\bX_k$, we adopt the adjustments as in Chapter 10 of \citet{ott2015introduction}. To be specific, zeros are replaced by $0.375/(m + 0.75)$ whereas ones are replaced by $(m + 0.375)/(m + 0.75)$, where $m$ is the number of trials. Then we estimate natural parameters from adjusted data as $\widehat \theta_{kij} =m \log\{x_{kij}/(1-x_{kij})\}$.  We let $\widetilde{\bTheta}_k$ be the column-centered $\widehat{\bTheta}_k$, and following Section~\ref{sec:normalCCA} initialize $\bU_k^{(0)}$ as the first $r_0$ canonical variables of $\Col(\widetilde \bTheta_1)$ and $\Col(\widetilde \bTheta_2)$. We initialize $\bZ_k^{(0)}$ as the remaining $r_k - r_0$ left singular vectors of $(\bI - \bU\bU^+)\widetilde{\bTheta}_k$. 

\subsection{Rank estimation}\label{sec:rank}
In Gaussian case, many rank estimation methods have been proposed to determine the total rank $r_k$ for each view. Some examples are edge distribution method \citep{onatski2010determining}, profile likelihood method \citep{ProfileLik2006Zhu} and thresholding method \citep{gavish2014optimal}. However, none of these methods directly extends to non-Gaussian data. Here, we determine the total ranks $r_k$ by adopting the 10-fold cross-validation method designed for exponential families \citep{li2018general}. Given the observed matrix $\bX_k$, a random part of its elements is set as missing, and the full underlying natural parameter matrix $\bTheta_k$ is estimated with given rank using exponential PCA \citep{collins2001generalization}. The best rank is chosen based on the minimal negative log-likelihood associated with hold-out elements of $\bX_k$ and corresponding elements of estimated $\bTheta_k$. We refer to \citet{li2018general} for additional details.

To estimate the joint rank $r_0$, \citet{li2018general} apply similar approach to estimate the rank of concatenated $(\bTheta_1, \bTheta_2)$. However, this approach may not be valid for the proposed ECCA model~\eqref{eq:expDecom} since we allow the column spaces between $\bU_1$ and $\bU_2$ to be different. Instead, we adopt a principal angles approach as described in \citet{yuan2022double}. Specifically, we first construct the proxy low-rank natural parameter matrices $\widehat \bTheta_k$ by applying low-rank exponential PCA \citep{landgraf2020generalized} separately to each view. We then calculate principal angles between $\widehat \bTheta_1$ and $\widehat \bTheta_2$, and cluster these angles into two groups using profile likelihood. The number of elements in the cluster with smallest angles is used as an estimate of joint rank. We refer to \citet{yuan2022double} for additional details.

\section{Simulation studies}\label{sec:lrccaSimu}
We consider three settings for data generation, and use 100 replications for each.
\begin{description}
\item[\textbf{Setting 1}] Both $\bX_1$ and $\bX_2$ follow Gaussian distribution.

\item[\textbf{Setting 2}] $\bX_1$ follows Gaussian distribution and $\bX_2$ follows Binomial proportion distribution. 

\item[\textbf{Setting 3}] Both $\bX_1$ and $\bX_2$ follow Binomial proportion distribution. 
\end{description}

For all settings, we set sample size $n = 50$, and dimensions $p_1 = 30$, $p_2 = 20$. We generate data according to model~\eqref{eq:expDecom} with $r_0=3$ nonzero canonical correlations with corresponding values $\bLambda = \diag(1, 0.9, 0.7)$. The total ranks of centered natural parameter matrices are set to $r_1 = 7$, $r_2 = 6$. For Binomial proportion distribution we use $m=100$ trials. Additional data generation details are in Web Appendix C.

We compare the performance of the following methods: (i) ECCA, the proposed approach; (ii) DCCA adopted to the exponential family setting, where we apply DCCA \citep{shu2020d} to the saturated matrices of natural parameters; (iii) EPCA-DCCA, where we first estimate low-rank natural parameter matrices using exponential PCA \citep{landgraf2020generalized}, and then apply DCCA; (iv) GAS, Generalized Association Study framework \citep{li2018general}. Implementation details for each methods are described in Web Appendix C. The ranks for all methods are set at true values. For GAS, we consider two cases: joint rank 3 (GAS-rank3) which is misspecified model as it enforces top three canonical correlations to be one, and joint rank 1 (GAS-rank1) which puts the 2nd and 3rd canonical pairs as part of individual structures. 

To assess the performance, we consider the overall relative error
$$
\text{relative error}=\frac{\|\widehat\bTheta_k - \bTheta_k\|_F^2}{\|\bTheta_k\|_F^2},\ k=1,2,
$$
where $\widehat\bTheta_k$ are the estimated natural parameter matrices and $\bTheta_k$ are true natural parameter matrices. We use this metric as its invariant to the choice of decomposition. To assess the joint signal estimation performance, we also evaluate the chordal distance \citep{ye2016schubert}
$$
\frac{1}{\sqrt{2}}\left\|\bJ_k\bJ_k^{+} - \widehat{\bJ}_k\widehat{\bJ}_k^{+}\right\|_F,\ k=1,2.
$$

Figure~\ref{fig:GG_signal} shows relative errors across all settings and Figure~\ref{fig:GG_subspace} shows the corresponding chordal distances associated with joint subspaces. When both distributions are Gaussian, all methods perform similar except GAS-rank3 that has the worst performance. This is expected, since GAS-rank1 is an accurate model in this setting. DCCA has the slight advantage over other methods in relative error, but gives similar chordal distances. When one or both distributions are Binomial, DCCA performance deteriorates, with EPCA-DCCA outperforming DCCA. GAS-rank1 as expected outperforms GAS-rank3, but surprisingly is significantly worse on joint signal compared to DCCA and exhibits high variance. One possible explanation is that GAS is implemented for Binomical case with $m=1$, and thus unlike ECCA, does not use $m$ to reweight the likelihood in objective. Another possible explanation is that GAS is using one-step approximation algorithm for model fitting, and this approximation may lead to suboptimal solutions in some cases. Overall, we find that the proposed ECCA has the best performance, as it is similar to DCCA in Gaussian case, and outperforms other methods when at least one of the distributions is Binomial.
\begin{figure}[!t]
\includegraphics[width=1\textwidth]{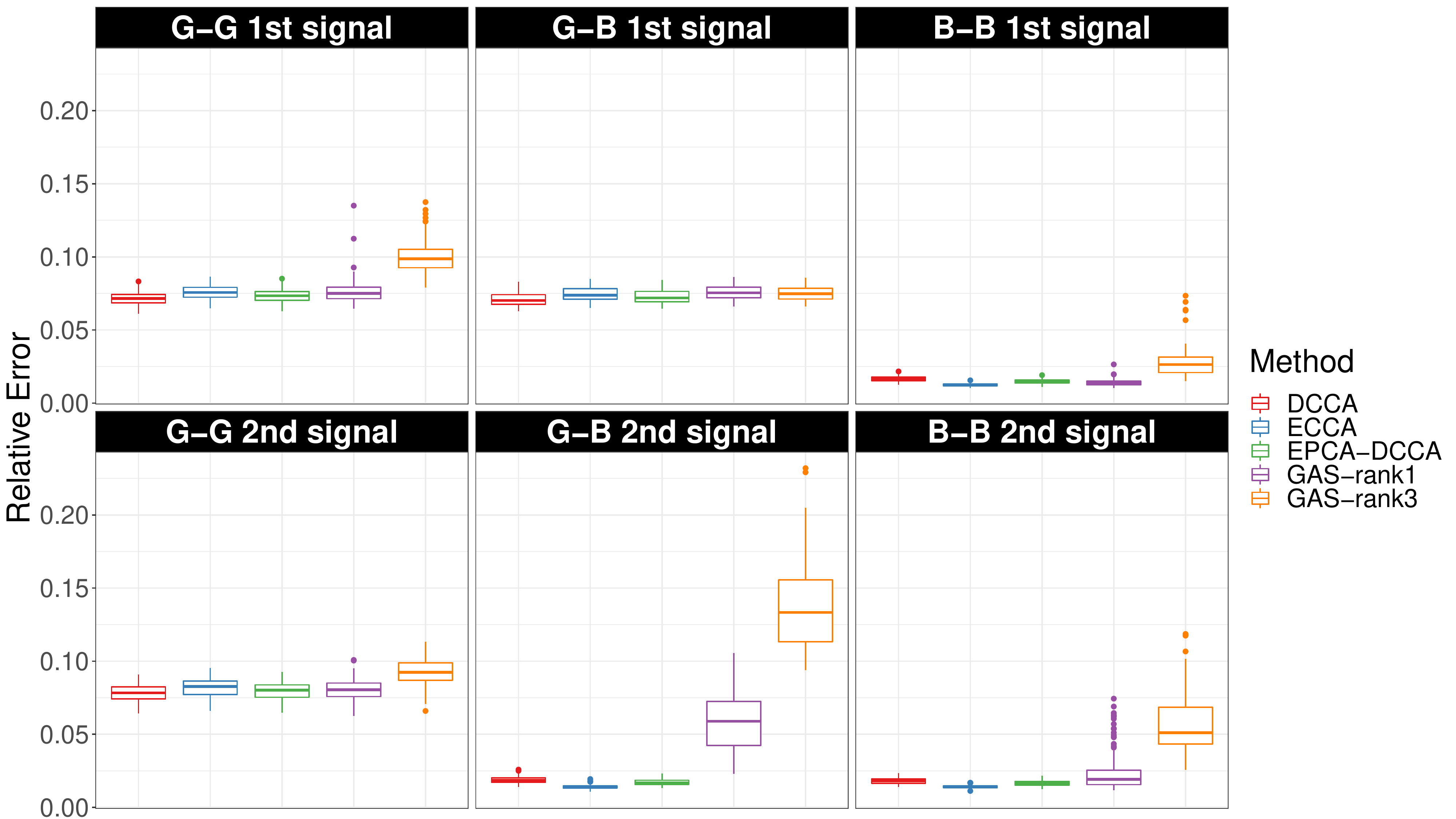}
\caption{Comparison of relative error for all settings over 100 replications. }
\label{fig:GG_signal}
\end{figure}

\begin{figure}[!t]
\includegraphics[width=1\textwidth]{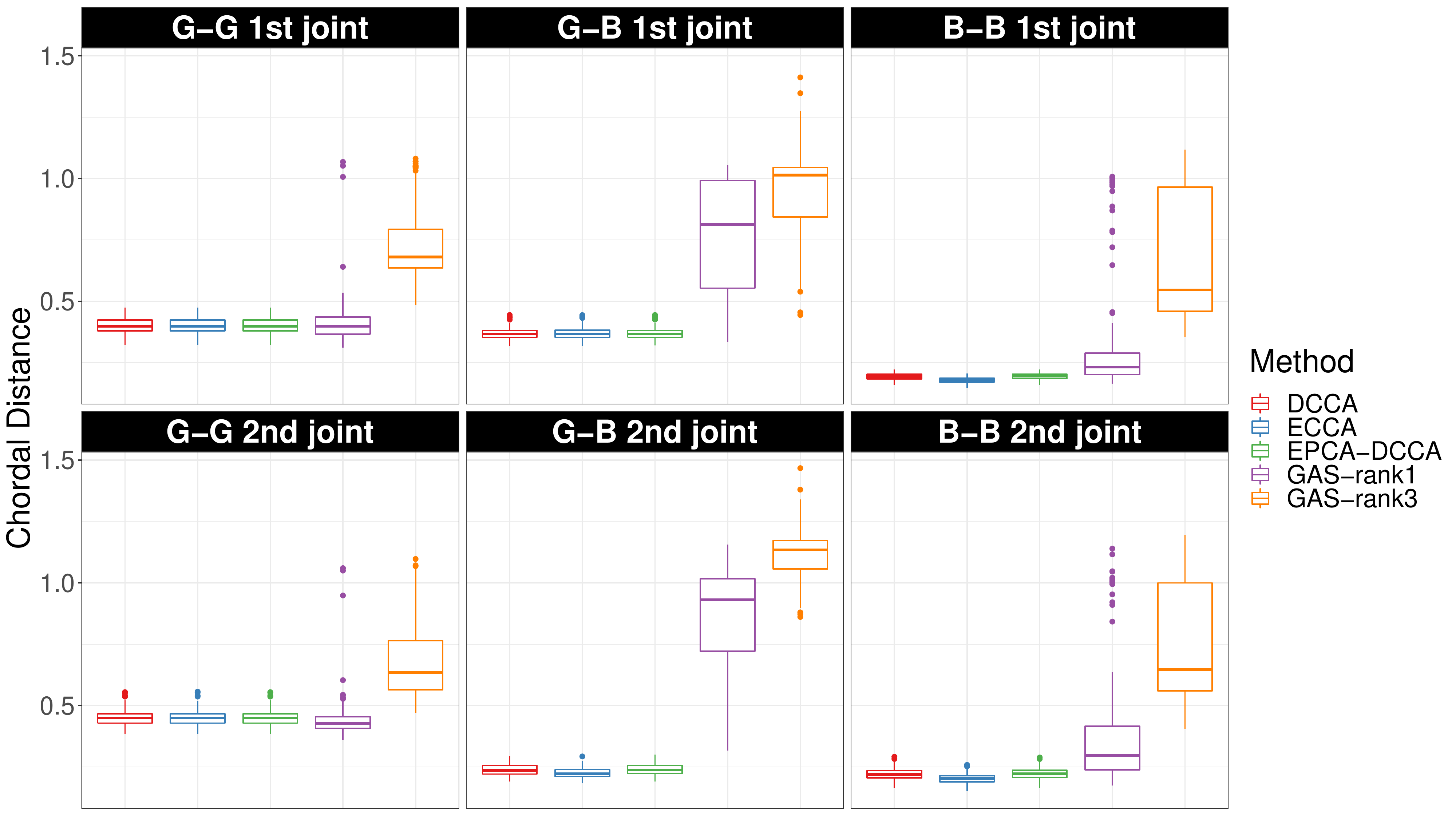}
\caption{Comparison of chordal distance for all settings over 100 replications. }
\label{fig:GG_subspace}
\end{figure}

\section{Applications}\label{sec:data}
\subsection{Nutrigenomic study}
The nutrimouse dataset \citep{martin2007novel} is available in R package \textsf{mixOmics} \citep{rohart2017mixomics}. There are $n=40$ mice, with the first view containing $p_1=120$ gene expression measurements from liver cells, and the second view containing $p_2=21$ concentrations (in percentages) of hepatic fatty acids (lipids). Mice are separated into two genotypes, wild-type (wt) and PPAR$\alpha$ -/- (ppar) mutant, and are administered five different diets: reference diet of corn and colza oils (ref), saturated fatty acid diet of hydrogenated coconut oil (coc), Omega6-rich diet of sunflower oil (sun), Omega3-rich diet of linseed oil (lin), and diet with enriched fish oils (fish). Out goal is to extract correlated and orthogonal signals across both views, and investigate how these signals relate to genotype and diet effects. 

We use Gaussian distribution to model gene expressions in first view, and Binomial distribution to model concentrations (transferring percentages to proportions). We use $m = 100$ trials to reflect that the original data is measured in percentages, which effectively adjusts the relative weights between Gaussian and Binomial likelihoods in~\eqref{eq:finalform} (Section~\ref{sec:parameter}). There are 17.5\% zero proportions, which we replace with $0.375/(m + 0.75)$ as in Section~\ref{sec:init}. We use cross-validation to estimate the total ranks as $r_1 = 3$ and $r_2 = 4$  (Section~\ref{sec:rank}). To determine the joint rank $r_0$, we calculate the principal angles between the low-rank estimated natural parameters leading to angles of 35.0, 57.2, 74.1 degrees. Given the angle 74.1 being close to 90, we set the joint rank to $r_0 = 2$, and fit ECCA model.

Left panel of Figure~\ref{fig:mouse_PCA_joint} displays the joint scores (two left singular vectors of concatenated $[\bU_1\ \bU_2]$) coded by diet and genotype. There is a clear genotype separation based on the first joint component, confirming that the genotype affects both gene expression and lipids concentrations. The second joint component captures diet effect, with the contrast between coc and fish diets being most visible. The other diets, however, are not well-separated. Right panel of Figure~\ref{fig:mouse_PCA_joint} displays the individual scores for lipids. In contrast to joint scores, the individual scores show a clear diet effect. In summary, the ECCA decomposition helps to separate the genetic effects on lipid concentrations from diet effects.
\begin{figure}
    \centering
    \includegraphics[width=0.49\textwidth]{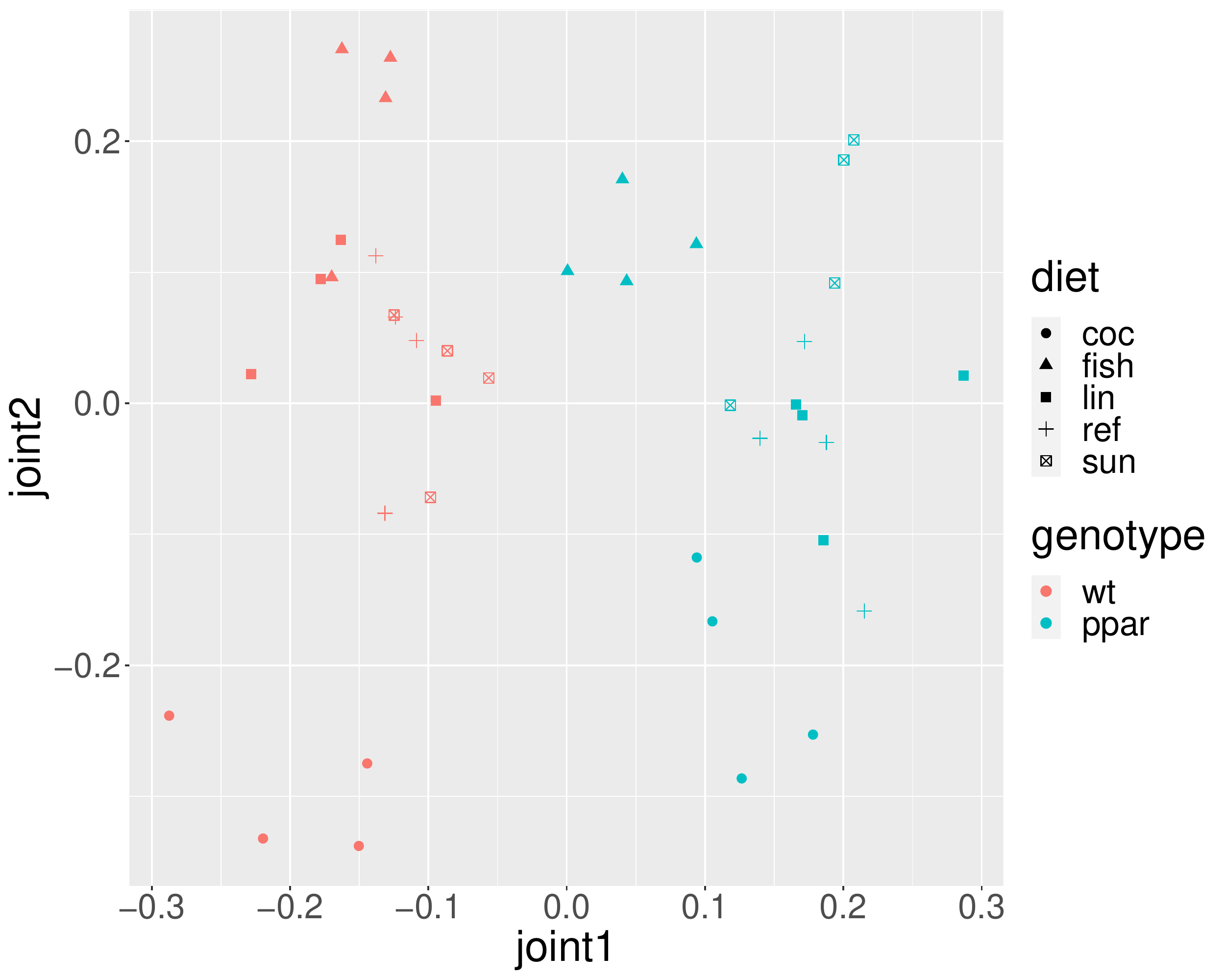}
    \includegraphics[width=0.49\textwidth]{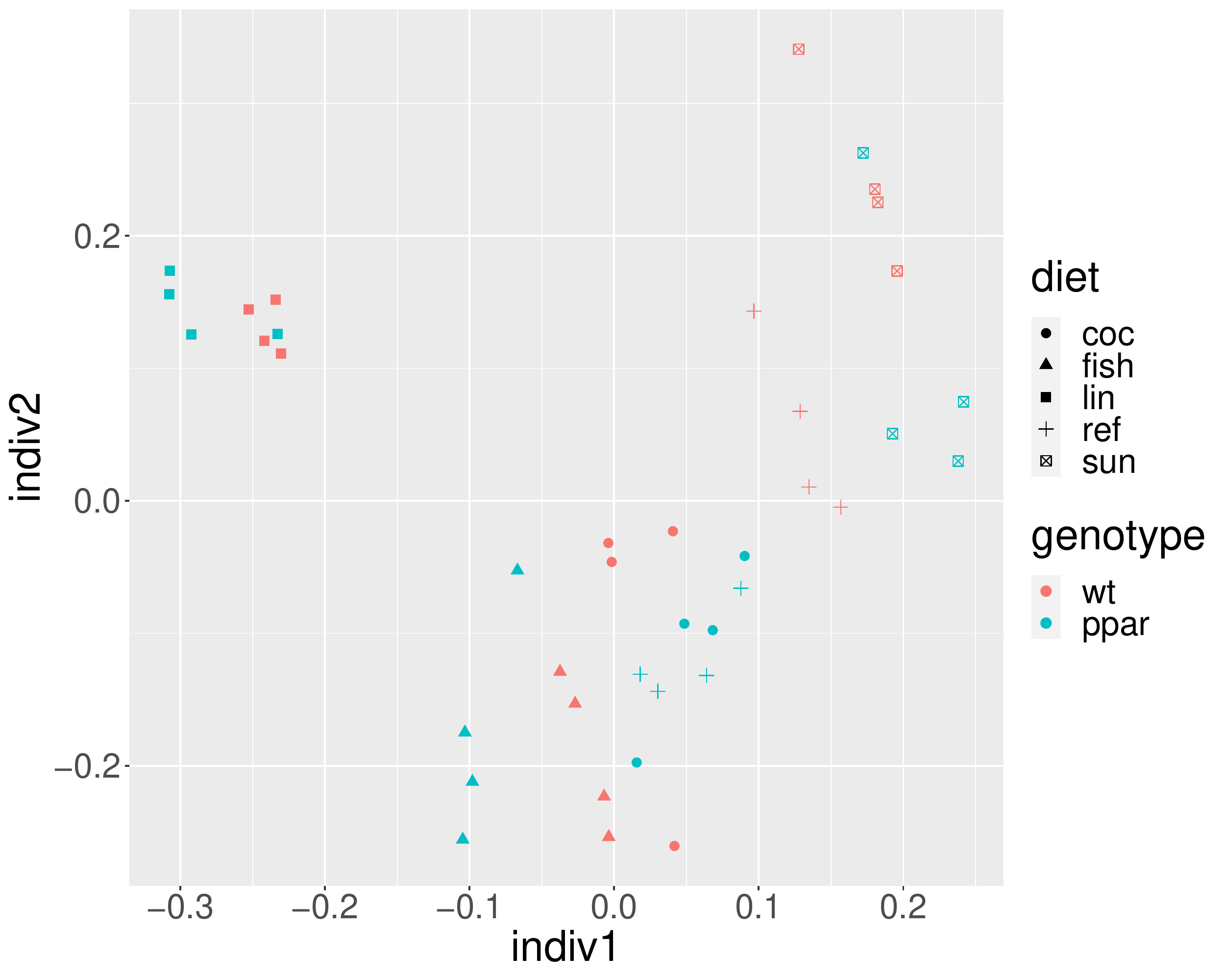}
    \caption{ECCA scores from nutrimouse data colored by genotype and diet. Left: Joint scores between gene expressions and lipid concentrations. Right: Individual scores for lipid concentrations. [This figure appears in color in the electronic version of this article, and any mention of color refers to that version]}
    
    \label{fig:mouse_PCA_joint}
\end{figure}
To further illustrate advantages of ECCA on these data, in Web Appendix D we compare the results of ECCA with GAS \citep{li2018general}. We find that ECCA scores lead to better separation of genotype and diet effects, and that orthogonality of individual scores in ECCA is advantageous in interpretation for this study, as the observed diet effects in individual components can be fully attributed to lipids view.

\subsection{Tumor heterogeneity in prostate cancer}

%\indent 
Prostate cancer (PCa) is the second leading cause of cancer-related death in males in the U.S, with approximately 268,490 new cases and 34,500 deaths expected in 2022 \citep{jemal2021prostate}. The immune response in PCa plays a critical role in directing the evolution of tumor cells and contribute to the extensive inter-tumor heterogeneity among PCa patients \citep{binnewies2018understanding}. Current clinical indexes such as the cancer stage, PSA (prostate specific antigen) level, and Gleason scores lack the ability to address the mechanism of heterogeneity and thus are insufficient for definitive identification and treatment of PCa. To address this question by evaluating the immune cell subtype profiles, we apply our ECCA framework on The Cancer Genome Atlas (TCGA) \citep{abeshouse2015molecular} PCa dataset. We use two complementary deconvolution methods to achieve distinct aspects of PCa cellular compositions. For the first view, the cellular composition is determined using DeMixT method of \citet{wang2018transcriptome} that extracts transcript proportions corresponding to three cell types: immune, normal (stroma) and tumor. As the proportions from DeMixT sum to one, we only focus on normal and immune proportions ($p_1=2$). For the second view, we consider Tumor Immune Estimation Resource (TIMER) of \citet{li2017timer}, leading to cell count proportion data corresponding to $p_2 = 6$ cell types: B cells, CD4-T cells, CD8-Tcells, Dendritic cells, Macrophage cells and Neutrophil cells. Unlike DeMixT, TIMER does not produce compositional data, thus the six proportions do not sum to one. Both DeMixT and TIMER are applied to the RNA sequencing data from the same $n = 293$ patients, but dissected the mixed signals in different spaces, transcript versus cell counts; as well as in different cell types, all immune cells combined versus immune cell subtypes. Our goal is to separate joint and individual parts of the signal between DeMixT and TIMER, and investigate how these signals relate to the clinical outcome of prostate cancer patients as measured by progression-free survival. 

In summary, we obtain $\bX_1\in \R^{293\times 2}$ and $\bX_2 \in \R^{293 \times 6}$ corresponding to DeMixT and TIMER, respectively. We treat both datasets as proportions arising from binomial distribution with the same number of trials $m$. From Section~\ref{sec:parameter}, the value of $m$ does not affect the resulting solution, and we set it to one for simplicity. Due to small number of features in both datasets, we omit the intercept terms $\bmu_k$ in ECCA model fixing $\bmu_k=0$ throughout. As DeMixT only has two features, we set its total rank as $r_1 = 2$. To determine the total rank for TIMER, we use cross-validation (Section~\ref{sec:rank}) leading to $r_2 = 3$. To determine the joint rank $r_0$, we calculate the principal angles between the low-rank estimated natural parameters of DeMixT and TIMER by exponential PCA. There are two non-zero principal angles of 27.0 and 72.3 degrees. Given the large separation across the two angles and 27.0 being close to zero, we set the joint rank to $r_0 = 1$. For simplified follow-up analysis and interpretation, we combine joint $\bU_1$ and $\bU_2$ into one common score $\bU_{joint}$ based on leading left singular vector of $[\bU_1 \ \bU_2]$. We also rotate the individual scores $\bZ_2$  for TIMER so that the corresponding loading vectors $\bA_2$  are orthogonal in light of identifiability conditions in Theorem~\ref{thm:modelident}.

\begin{figure}[!t]
% \begin{tabular}[c]{cc}
% \multirow{2}{*}{
% \begin{subfigure}{.45\textwidth}
% \includegraphics[width =\textwidth]{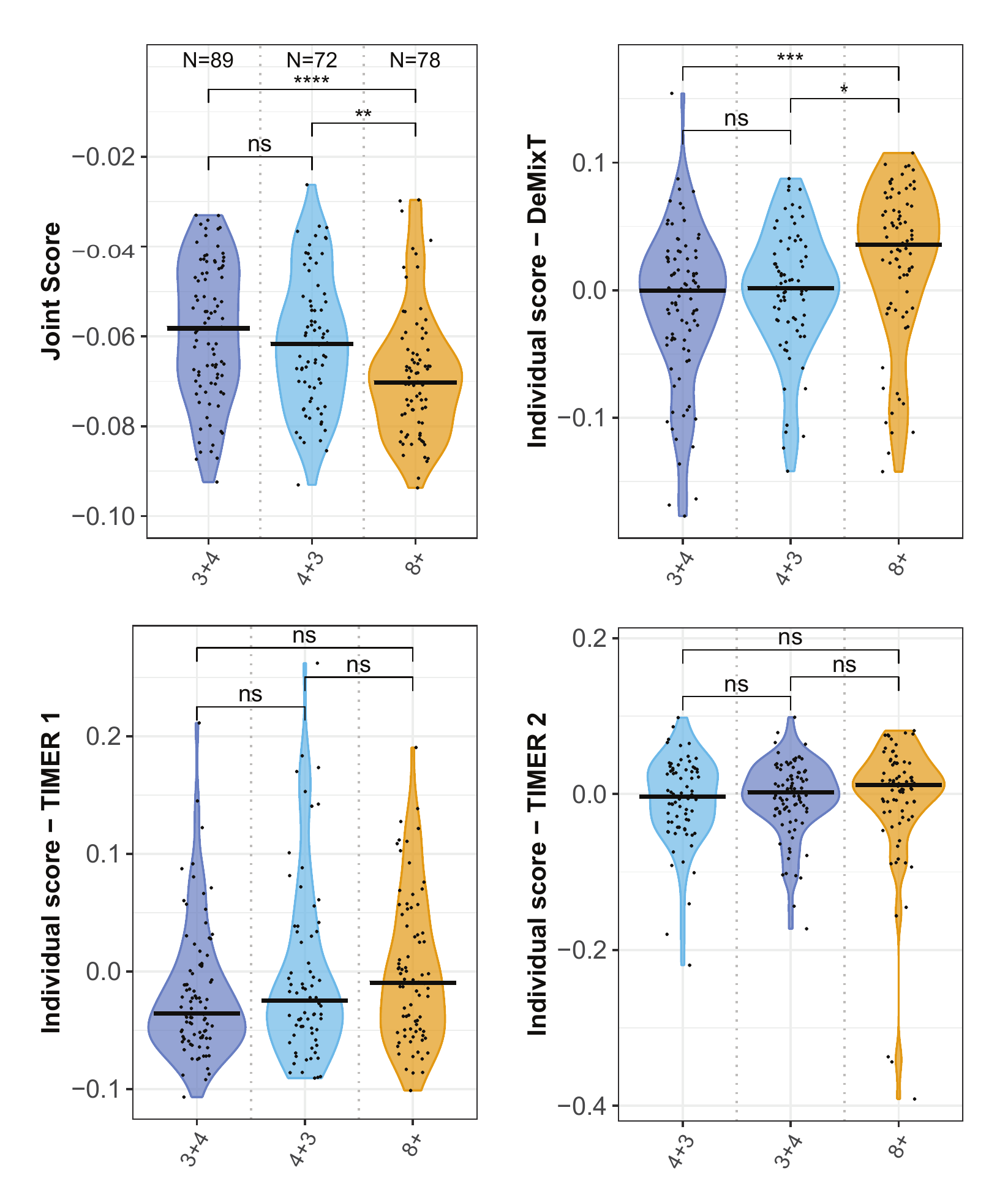}
% \caption{}
% \end{subfigure}
% }&
% \begin{subfigure}[c]{.45\textwidth}
% \includegraphics[scale = 0.5]{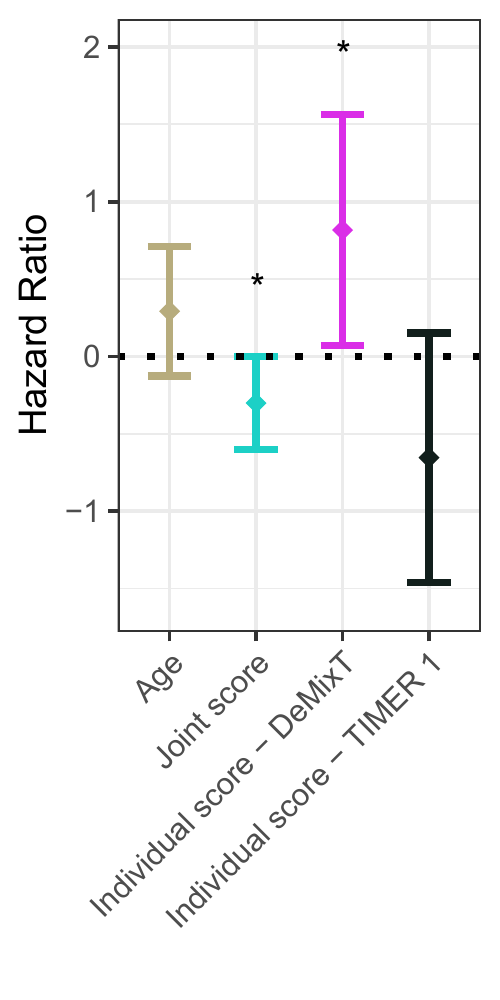}
% \caption{}
% \end{subfigure}\\
% & \begin{subfigure}[c]{.45\textwidth}
% \includegraphics[width =\textwidth]{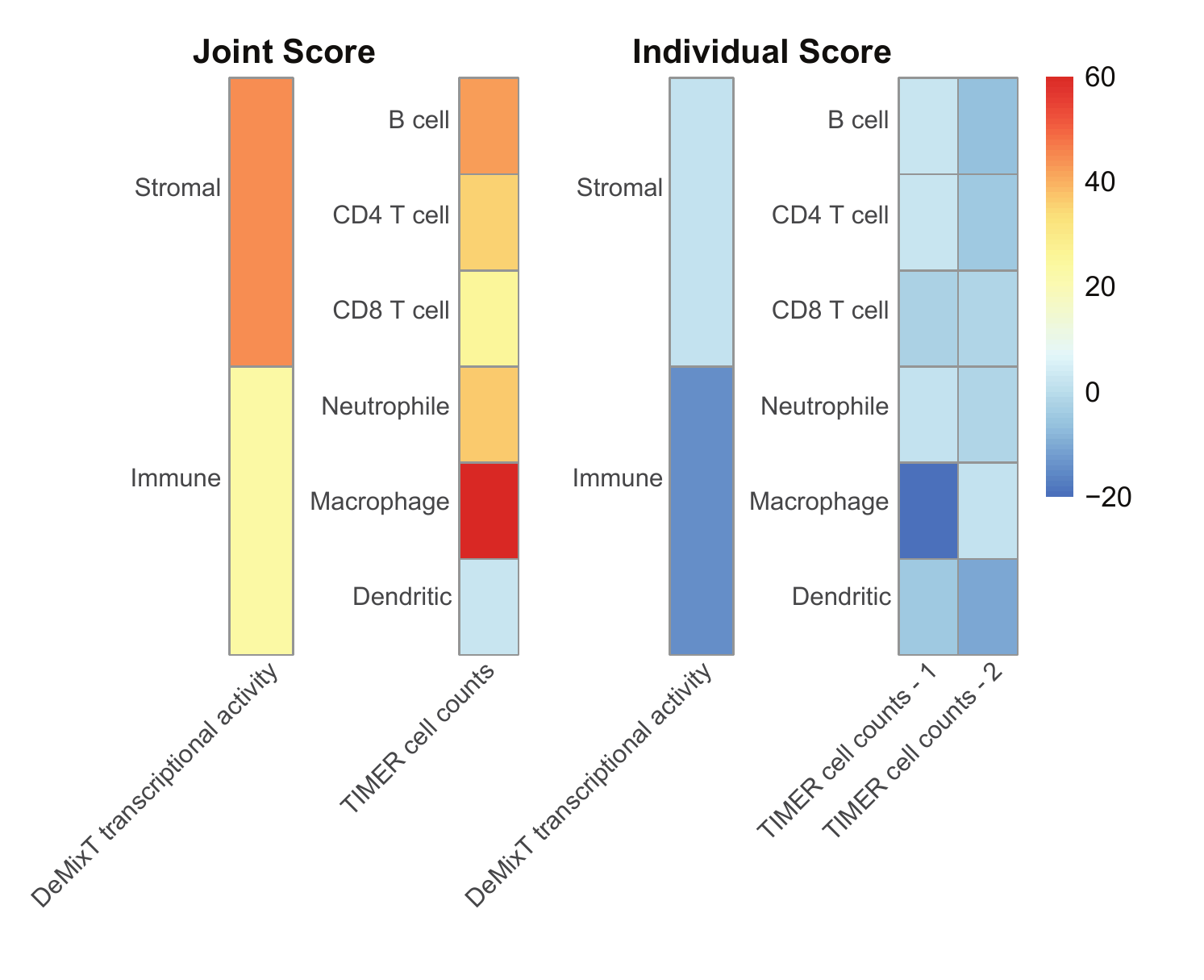}
% \caption{}
% \end{subfigure}\\
% \end{tabular}
\begin{subfigure}{.5\textwidth}
 \includegraphics[width =\textwidth]{Fig4A_ECCA_scores_and_Gleason.pdf}
 \caption{}\label{fig:gleason}
 \end{subfigure}
 \begin{subfigure}[c]{.3\textwidth}
 \includegraphics[width =\textwidth]{Fig4B_HR_valid_plot.pdf}
 \caption{}\label{fig:HR}
 \end{subfigure}\\
 \centering
 \begin{subfigure}[c]{.6\textwidth}
 \includegraphics[width =\textwidth]{Fig4C_ECCA_loading_vectors.pdf}
 \caption{}\label{fig:loadings}
 \end{subfigure}
\caption{(A)~Stratification of joint and individual components by Gleason score categories; (B)~Hazard ratios with 95\% confidence intervals for PFI; (C)~Loading vectors corresponding to joint and individual scores for DeMixT and TIMER.}
\label{fig:loading}
\end{figure}

In order to evaluate the potential utility of $\bU_{joint}$ and individual scores for PCa, we compare these scores with the clinically utilized prognostic feature Gleason score as well as their association with progression-free interval (PFI) by considering patients with Gleason scores of 7 and 8+ ($n = 239$). More details are in Web Appendix D.2. We find a significantly lower $\bU_{joint}$ score together with a significantly higher individual score of DeMixT in Gleason score = 8+ (Figure~\ref{fig:gleason}, both p-values $<$ 0.001), representing a patient subgroup with less favorable clinical outcomes. However, neither of the individual scores of TIMER is associated with Gleason group (Figure~\ref{fig:gleason}). Furthermore, we find both high $\bU_{joint}$ and low DeMixT individual score are independently associated with improved PFI in patients with PCa ($\bU_{joint}$: hazard ratio (HR) = 0.81, 95\% confidence interval (CI): 0.65, 0.99, p-value = 0.05; DeMixT individual score: HR = 1.76, 95\% CI: 1.05, 2.95, p-value = 0.03; Figure~\ref{fig:HR}, Table~\ref{t:Coxph_NoGleason}). TIMER individual scores are not associated with PFI. The general trends in the observed associations remain after adjusting for the Gleason score status in Cox regression (Table~\ref{t:Coxph}), although no longer statistically significant, supporting the notion that measuring immune cell activities could improve the current clinical practice for identifying and treating PCa. Furthermore, these results, together with recent findings in tumor total mRNA expression levels as a potential biomarker \citep{cao2022estimation}, lead to our next hypothesis that immune transcript proportions, as generated by DeMixT, contain complementary signals from both the immune cell counts and the immune-specific transcriptome variations. Figure~\ref{fig:loadings} of the ECCA loading values reveals that in DeMixT the joint score with TIMER captures both stromal and immune proportions with a higher weight on the stromal proportion, whereas in TIMER it captures all proportions except dendritic cells, with the highest weight on the macrophage, which the immune cell type generating the highest amount of transcripts \citep{schelker2017estimation}. The individual DeMixT score represents an orthogonal and unexplained part of the immune transcript proportion (p-value = 0.0003). In contrast, neither the first individual score nor the second individual score for TIMER is significant. In summary, application of our novel ECCA analysis framework to multiple immune deconvolution methods have the potential to provide novel biological insights in varying immune cell activities in PCa.

\begin{table}[!t]
\caption{P-values from Cox Proportional-Hazards model using joint and individual scores between DeMixT and TIMER as predictors}
\begin{center}
\begin{tabular}{llcc}
Notation & Interpretation & Hazard ratio & P-value \\\hline
Age & Tumor diagnosed age & 1.22 & 0.172 \\
$\bU_{joint}$ & Joint between DeMixT and TIMER & 0.81 & 0.049 \\
$\bZ_1$ & Individual DeMixT & 1.76 & 0.032 \\
$\bZ_{21}$ &1st individual TIMER & 0.64 & 0.112 \\
\end{tabular}
\label{t:Coxph_NoGleason}
\end{center}
\end{table}

\begin{table}[!t]
\caption{P-values from Cox Proportional-Hazards model using joint and individual scores between DeMixT and TIMER as predictors with the inclusion of Gleason score}
\begin{center}
\begin{tabular}{llcc}
Predictor & Interpretation & Hazard ratio & P-value \\\hline
Age& Tumor diagnosed age& 1.20 & 0.214 \\
Gleason score & Gleason score & 1.96 & 0.026 \\
$\bU_{joint}$& Joint between DeMixT and TIMER& 0.86  & 0.164 \\
$\bZ_1$& Individual DeMixT & 1.56 & 0.103 \\
$\bZ_{21}$&1st individual TIMER & 0.67  & 0.163 \\
\end{tabular}
\label{t:Coxph}
\end{center}
\end{table}

\section{Discussion}\label{sec:eccaDis}
 We present ECCA model for the association analysis of datasets with measurements coming from exponential family distributions. The R code with methods implementation can be found at \url{https://github.com/IrinaStatsLab/ECCA}. A unique characteristic of ECCA is the orthogonality of the individual score matrices, which enhances interpretation of individual signals, but leads to non-trivial optimization challenges. Numerical studies illustrate that ECCA outperforms existing methods in simulations. %, and leads to new biological insights in applications. 
When applied to nutrimouse data, ECCA effectively separates the effect of genotype from the effect of diet based on joint and individual scores between gene expression and lipids concentrations. When applied to tumor heterogeneity study, ECCA effectively extracts joint and individual signals that are biologically meaningful between two different immune deconvolution methods. These scores are then shown to provide additional insights into heterogeneity of immune cell subtype profiles, and their contribution to clinical prognosis in patients with localized but high-risk prostate cancer.

The method has several limitations that require further research. First, while the model~\eqref{eq:expDecom} and optimization~\eqref{eq:finalform} are formulated for general case of exponential family, our implementation and numerical results are limited to Gaussian and Binomial proportion cases, as those were sufficient for motivating datasets. It would be of interest to expand the results to other families, e.g., Poisson, Exponential. Secondly, the ECCA algorithm is computationally demanding due to the use of iterative SOC updates. One possible remedy is to run intermediate SOC updates only for a few iterations without full convergence. This will improve the overall cost of Algorithm~\ref{a:full} however a too small number of iterations may lead to divergence. Further investigation is needed to determine optimal tradeoff. Third, ECCA does not perform sparse regularization, thus may suffer in high-dimensional regimes. One possible way is to add $l_1$ regularization on the loading matrices as in sparse CCA \citep{witten2009penalized, yoon2020sparse}. To be specific, one can modify objective function~\eqref{eq:finalform} to be:

\begin{equation*}%\label{eq:objsparse}
\min_{\bTheta_1,\bTheta_2}\{L(\bTheta_1|\bX_1) + L(\bTheta_2|\bX_2) + \beta_1\|\bV_1\|_1 + \beta_2\|\bV_2\|_1\},
\end{equation*}
where $\|\bY\|_1 = \sum^{j=1}_{p}{\sum_{k=1}^{r_0}{|y_{jk}|}}$ is sparsity-inducing penalty and $\beta_1, \beta_2 \geq 0$ control the sparsity levels. However, the new objective function is no longer differentiable requiring the use of more complex optimization algorithms, in addition to the sparsity parameter selection. Finally, in standard CCA it is typical to maximize the correlation as the objective function, that is to maximize the magnitude of the diagonal elements of $\bU_1^\top\bU_2$. The proposed ECCA can incorporate this maximization by adjusting the objective function as follows
\begin{equation*}%\label{eq:objpenalty}
\min_{\bTheta_1,\bTheta_2}\{L(\bTheta_1|\bX_1) + L(\bTheta_2|\bX_2) +\beta\|\bU_1-\bU_2\|_F^2\},
\end{equation*}
where $\beta \geq 0$ is a hyper-parameter. Due to orthogonality of $\bU_k$, adding $\|\bU_1-\bU_2\|_F^2$ term to the objective is equivalent to adding $-\Tr(\bU_1^\top\bU_2)$, with $\beta$ controlling the relative importance of correlation maximization compared to likelihood for each individual view. Algorithm~\ref{a:full} can be used for this problem with some adaptation of score updates (Section~\ref{s:joint_update}), however it's unclear how to choose the value of optimal $\beta$. It would be of interest to investigate these extensions in future work.

\section*{Acknowledgements}
This work was supported by NSF DMS-1712943 and DMS CAREER-2044823.

\appendix

\renewcommand{\figurename}{}
\renewcommand{\thefigure}{Figure \arabic{figure}}
\renewcommand{\thetable}{Table \arabic{table}}
\renewcommand{\tablename}{}

\renewcommand{\thethm}{S.\arabic{thm}}

\renewcommand{\theequation}{A.\arabic{equation}}
\renewcommand{\thesection}{Appendix A}
\renewcommand{\thesubsection}{A.\arabic{subsection}}

\section{Technical proofs}
\label{s:proofs}

\begin{proof}[Proof of Theorem 2]

\textbf{Existence:} 
%Use definition of $U$ and $Z$ and projection matrices.
Let $\bU_1=[\bu_{11},\cdots,\bu_{1r_0}]\in\R^{n\times r_0}$, $\bU_2=[ \bu_{21},\cdots,\bu_{2r_0}]\in\R^{n\times r_0}$ contain canonical variables from~(2), and let $\bZ_1$, $\bZ_2$, $\bQ_1$ and $\bQ_2$ be defined as in Theorem~1. Let $\bV_k = \widetilde{\bTheta}_k^\top\bU_k$ and $\bA_k = \widetilde{\bTheta}_k^\top\bZ_k$. Then by construction
$$
\widetilde \bTheta_k = \bQ_k\bQ_k^{\top}\widetilde \bTheta_k = \bU_k\bU_k^{\top}\widetilde \bTheta_k + \bZ_k\bZ_k^{\top}\widetilde \bTheta_k = \bU_k\bV_k^{\top} + \bZ_k\bA_k^{\top},
$$
where $\bU_k$, $\bV_k$, $\bZ_k$, $\bA_k$ satisfy the corresponding conditions for model~(1).

\textbf{Uniqueness:} Combining Propositions~1--2 in the supplement of \citet{gaynanova2019structural}, for a given $\widetilde \bTheta_k$, $k=1, 2$, there exist unique $\bJ_k$, $\bI_k$ with $\widetilde \bTheta_k = \bJ_k + \bI_k$ such that:
\begin{enumerate}
    \item $\Col(\bJ_k) \perp \Col(\bI_k)$, $\Col(\bI_1) \perp \Col(\bI_2)$;
    \item all principal angles between $\Col(\bJ_1)$ and $\Col(\bJ_2)$ are strictly less than $\pi/2$;
    \item $\rank(\widetilde \bTheta_k) = \rank(\bJ_k) + \rank(\bI_k)$.
    
\end{enumerate}
Let $\bU_k$, $\bV_k$, $\bZ_k$, $\bA_k$ be such that model~(1) holds with corresponding conditions, and let $\bJ_k = \bU_k\bV_k^{\top}$ be joint signal, and $\bI_k = \bZ_k\bA_k^{\top}$ be individual signal with $\rank(\widetilde \bTheta_k) = \rank(\bJ_k) + \rank(\bI_k)$. Then (1)-(3) holds by construction, and $\bJ_k$, $\bI_k$ are unique. 

Note that the rank condition $\rank(\widetilde \bTheta_k) = \rank(\bJ_k) + \rank(\bI_k)$ implies $\Col(\bU_k\bV_k^\top) = \Col(\bU_k)$ and $\Col(\bZ_k\bA_k^\top) = \Col(\bZ_k)$. The reason is the following. We know $\Col(\bU_k\bV_k^\top) \subset \Col(\bU_k)$ and $\Col(\bZ_k\bA_k^\top) \subset \Col(\bZ_k)$, so $\rank(\bJ_k) \leq \rank(\bU_k)$ and $\rank(\bI_k) \leq \rank(\bZ_k)$, which means that $\rank(\bJ_k) + \rank(\bI_k) \leq \rank(\bU_k) + \rank(\bZ_k)$. Therefore the rank condition implies $\rank(\widetilde \bTheta_k) \leq \rank(\bU_k) + \rank(\bZ_k) = r_k$ (because $\bU_k$ and $\bZ_k$ are full rank and the sum of their number of columns are $r_k$). Thus all inequalities hold with equality, $\rank(\bJ_k) = \rank(\bU_k)$ and $\rank(\bI_k) = \rank(\bZ_k)$, and $\Col(\bU_k\bV_k^\top) = \Col(\bU_k)$ and $\Col(\bZ_k\bA_k^\top) = \Col(\bZ_k)$.

Assume there is a (potentially) different $\bU^\#_k$, $\bV^\#_k$, $\bZ^\#_k$, $\bA^\#_k$ that also satisfy all the conditions of model~(1). Then by uniqueness of $\bJ_k$, $\bI_k$, it must hold that $\bU_{k}\bV_{k}^{\top} = \bU_{k}^{\#}\bV_{k}^{\#\top}$ and $\bZ_{k}\bA_{k}^{\top} = \bZ_{k}^{\#}\bA_{k}^{\#\top}$.

Consider $\bZ_k\bA^\top_k$, then $\bZ^{\#}_k$ is such that $\Col(\bZ^{\#}_k) = \Col(\bZ^{\#}_k\bA^{\#\top}_k) = \Col(\bZ_k\bA^\top_k) =  \Col(\bZ_k)$. Since both $\bZ_k$ and $\bZ_k^{\#}$ have orthonormal columns forming orthonormal basis for the same linear subspace, the corresponding change of basis matrix $\bQ_k \in \R^{(r_k - r_0) \times (r_k - r_0)}$, such that $\bZ^{\#}_k = \bZ_k\bQ_k$, must be orthogonal.

Consider $\bU_k\bV^\top_k$. Similarly to above, there exists an orthogonal matrix $\bR_k \in \R^{r_0 \times r_0}$ so that $\bU^{\#}_k =  \bU_k\bR_k$. The additional constraint $\bU_1^\top\bU_2 = \diag(\rho_1, \dots, \rho_{r_0}) = \bLambda$ implies that $\bR^\top_1\bLambda\bR_2 = \bLambda$. If all the canonical correlations are distinct, then by Autonne's uniqueness theorem \citep{horn2012matrix}, it follows that $\bR_1 = \bR_2 = \diag(\pm1, \dots, \pm1)$. Thus, $\bU_k$ are unique up to a sign.

\end{proof}

\begin{proof}[Proof of Theorem 3]
First, we show that the constraint set is compact. The constraint set for updating $\bZ = (\bZ_1,\bZ_2)$ is $\mathcal{S} = \{\bM:(\mathbf{1}_n,\bU_1,\bU_2)^\top\bM = \mathbf{0},\ \bM^\top\bM = \bI\}$. Denote $\mathcal{S}_1 = \{\bM:(\mathbf{1}_n,\bU_1,\bU_2)^\top\bM = \mathbf{0}\}$, $\mathcal{S}_2 = \{\bM:\bM^\top\bM = \bI\}$. Then we have $\mathcal{S} = \mathcal{S}_1 \cap \mathcal{S}_2$. To prove $\mathcal{S}$ is compact, we need to show that $\mathcal{S}$ is closed and bounded. $\mathcal{S}$ is closed because both $\mathcal{S}_1$ and $\mathcal{S}_2$ are closed. $\mathcal{S}$ is bounded because the Stiefel manifold $\mathcal{S}_2$ is bounded \citep{AbsilMahonySepulchre+2009}.

Using compactness and Corollary 2 in \citet{wang2019global}, the statement holds if the objective function is Lipschitz differentiable with respect to $\bZ$. 

Denote
$$
\bTheta = \bpm\bTheta_1 & \bTheta_2\epm = \textbf{1}_n \bpm \bmu_{1}^\top & \bmu_{2}^\top \epm + \bpm \bU_{1} & \bU_2 \epm \bpm \bV_1^\top & \bf 0 \\ \bf 0 &\bV_2^\top\epm + \bpm \bZ_1 & \bZ_2 \epm \bpm \bA_1^\top & \bf 0 \\ \bf 0 &\bA_2^\top\epm.
$$
To prove the objective function is Lipschitz differentiable, we will show the Hessian matrix is bounded, which is sufficient since the function is convex with respect to $\bZ$. We first show the Hessian with respect to $i$-th row of $\bZ$ is bounded.

Let $\mathbf{\Psi}_k = b^{''}(\bTheta_k) \in \R^{n \times p}$, then the Hessian with respect to the $i$-th row of $\bZ_k$ is 
$$
\bH^{i}_{k} = \bA_k^\top\diag{(\bPsi_{k,i1}, \cdots, \bPsi_{k,ip})}\bA_k,
$$
where $\bPsi_{k,ij}$ is the $(i,j)$ entry of matrix $\bPsi_k$. 
Combining $k = 1, 2$, we know that the Hessian for updating the $i$-th row of $\bZ = (\bZ_1,\bZ_2)$ is 
$$
\bH^{i} = \bA^\top \diag{(\bPsi_{1,i1}, \cdots, \bPsi_{1,ip_1}, \bPsi_{2,i1}, \cdots, \bPsi_{2,ip_2})} \bA,
$$
where $\bA = \bpm \bA_1 & \bf 0 \\ \bf 0 &\bA_2\epm \in \R^{(p_1+p_2)\times(r_1+r_2)}$.

For Gaussian data with variance 1, then
$
\bPsi_{k,ij} = 1.
$
For Binomial-proportion data, 
$$
0 \leq \bPsi_{k,ij} = \frac{1}{m}\mathbf{Pr}_{k,ij}(1 - \mathbf{Pr}_{k,ij}) \leq \frac{1}{4m} < 1,
$$
where $\mathbf{Pr}_{k,ij} = \exp{(\bTheta_{k,ij}/m)}/(1 + \exp{(\bTheta_{k,ij}/m)})$ is the probability of success for $(i,j)$ entry of $\bX_{k}$. %This means the Hessian matrix $\bH^{i}$ is positive semidefinite. 
Moreover, 
\begin{align}
%\lambda_{max}(\bH^{i})
\|\bH^{i}\|_{op}
&= \|\bA^\top\diag{(\bPsi_{1,i1}, \cdots, \bPsi_{1,ip_1}, \bPsi_{2,i1}, \cdots, \bPsi_{2,ip_2})}\bA\|_{op} \nonumber\\
&\leq \|\bA^\top\|_{op}\|\diag{(\bPsi_{1,i1}, \cdots, \bPsi_{1,ip_1}, \bPsi_{2,i1}, \cdots, \bPsi_{2,ip_2})}\|_{op}\|\bA\|_{op} \nonumber\\
&\leq \|\bA^\top\|_{op}\|\bA\|_{op},  \label{eq:hessian_bound}
\end{align}
where $\|\cdot\|_{op}$ is the matrix operator norm. Since $\bA$ is fixed, inequality~\eqref{eq:hessian_bound} means that the Hessian matrix is bounded, which implies that the objective function with respect to each row of $\bZ$ is Lipschitz differentiable. Since objective function can be written as a sum of $n$ functions (with each depending only on the $i$-th row), it follows that the objective function is Lipschitz differentiable with respect to whole $\bZ$.

\end{proof}

\renewcommand{\thesection}{Appendix B}
\renewcommand{\thesubsection}{B.\arabic{subsection}}
\section{Optimization details}
\label{ss:opt}
\subsection{Update of loading matrices}
When updating loading matrices with other parameters fixed, we solve the following optimization problem for $k = 1, 2$:
$$
(\bmu^*_k,\bV^*_k,\bA^*_k) = \argmin_{\bmu_k,\bV_k,\bZ_k}{L(\bTheta_k|\bX_k)}.
$$
We separate Gaussian and non-Gaussian cases.

\subsubsection{Closed-form update in Gaussian cases}
Assuming that the exponential family is Gaussian, we have 
$$
L(\bTheta|\bX) = \frac{1}{2}\|\bX-\bTheta\|^2_F + constant
$$
Denote $\bS = \bpm\textbf{1}_n&\bU&\bZ\epm \in \mathbb{R}^{n\times (1+r_0+r_1)}$ and $\bT = \bpm\bmu&\bV&\bA\epm \in \mathbb{R}^{p\times (1+r_0+r_1)}$. 
Take the derivative with respect to $\bT$ and set it equal to zero, we have
$$
(\bX^\top - \bT\bS^\top)\bS = 0
$$
Rearrange the equation we get the minimizer $\bT^*$:
$$
\bT^{*\top} = \bS^+\bX,
$$
where $\bS^+$ is the Moore - Penrose inverse. 
This means that our optimal $\bTheta$ is simply the projection of $\bX$ onto $\bS$ in Gaussian case:
$$
\bS\bT^\top = \bS\bS^+\bX.
$$

\subsubsection{Damped Newton's update in non-Gaussian cases}

Since there are no constraints on mean vector $\bmu_k$ and loading matrices $\bA_k$, $\bV_k$, we can choose to use damped Newton's method in non-Gaussian cases. Here, we ignore the subscript of $k$ since they are symmetric in the objective function. For $k = 1,2,\cdots,p$, the gradient and Hessian for updating $\bmu \in \R^p$ is
\begin{equation}
\label{eq:mu_gradient}
\frac{\partial L}{\partial \mu_k} = \sum_{i=1}^{n}{b'(\theta_{ik})-x_{ik}} 
\end{equation}

\begin{equation}
\label{eq:mu_hessian}
\frac{\partial^2 L}{\partial \mu_j\mu_k} =
 \begin{cases} 
      \sum_{i=1}^{n}{b''(\theta_{ik})} & j=k \\
      0 & \text{otherwise} 
   \end{cases}
\end{equation}

The gradient and Hessian matrix for updating each row ($k = 1,2,\cdots,p$) of loading matrix $\bV$ and $\bA$ are  
\begin{equation}
\label{eq:v_gradient}
\frac{\partial L}{\partial v_{kj}} = \sum_{i=1}^{n}{b'(\theta_{ik})u_{ij}-x_{ik}u_{ij}} 
\end{equation}

\begin{equation}
\label{eq:v_hessian}
\frac{\partial^2 L}{\partial v_{ki}v_{kj}} = \sum_{l=1}^{n}{u_{li}u_{lj}b''(\theta_{lk})}
\end{equation}

\begin{equation}
\label{eq:a_gradient}
\frac{\partial L}{\partial a_{kj}} = \sum_{i=1}^{n}{b'(\theta_{ik})z_{ij}-x_{ik}z_{ij}} 
\end{equation}

\begin{equation}
\label{eq:a_hessian}
\frac{\partial^2 L}{\partial a_{ki}a_{kj}} = \sum_{l=1}^{n}{z_{li}z_{lj}b''(\theta_{lk})}
\end{equation}

For exponential family with natural parameter $\theta$, we have $\E(x|\theta)=b'(\theta)$ and $\Var(x|\theta)=b''(\theta)$. For Binomial proportion case with $m$ trials: 
$$
b'(\theta) = \frac{\exp{(\theta/m)}}{1+\exp{(\theta/m)}} = p,
$$
and 
$$
b''(\theta) = \frac{\exp{(\theta/m)}}{m(1+\exp{(\theta/m)})^2} = \frac{p(1-p)}{m},
$$
where $p$ is the probability of success in the Binomial distribution.

We observe that the gradient and Hessian formulas for $\bmu, \bA$ and $\bV$ are similar. In fact, by denoting $\bS_k = \bpm\textbf{1}_n&\bU_k&\bZ_k\epm \in \mathbb{R}^{n\times (1+r_k)}$ and $\bT_k = \bpm\bmu_k&\bV_k&\bA_k\epm \in \mathbb{R}^{p\times (1+r_0+r_1)}$, we see that $\bTheta = \bS\bT^\top$. This means we could update loading matrices jointly as parts of $\bT$ by damped Newton's method. To be specific, the update for $j$-th ($j = 1,\cdots,p$) row of $\bT$ ($\bT_{j}$) is
\begin{equation}
\label{eq:mu_update2}
\bT_{j}^{+} = \bT_{j} - t(\nabla_{\bT_{j}}^2 L)^{-1}\nabla_{\bT_{j}} L,
\end{equation}
where $\bT_{j}^{+}$ is the update of $\bT_{j}$ after one iteration and $t$ is the step size. We choose the step size by backtracking line search so that Armijo-Wolfe condition is satisfied \citep{NoceWrig06}.

\subsection{Update of orthogonal scores}

With other parameters fixed, we formulate the optimization problem of orthogonal score matrices $\bZ_1$ and $\bZ_2$ as
\begin{equation}\label{eq:ZSOCs}
\begin{split}
&\minimize_{\bZ_1,\bZ_2}L(\bTheta_1|\bX_1) + L(\bTheta_2|\bX_2) \\
\text{subject to } &\bpm\textbf{1}_n&\bU_1&\bU_2\epm ^\top\bpm\bZ_1&\bZ_2\epm=\bf{0} \text{ and } \bpm\bZ_1&\bZ_2\epm^\top \bpm\bZ_1&\bZ_2\epm = \bI,
\end{split}
\end{equation}
where the constraints are inherited from the regularity conditions. Problem~\eqref{eq:ZSOC} has convex objective function with non-convex orthogonality constraints. In general, this type of problems are challenging due to the non-convex constraints, and may have several different local minimizers. 

\subsubsection{Analytical solver for both Gaussian cases}
In both Gaussian cases, if we denote 
$\bY_k = \bX_k - \textbf{1}_n\bmu^\top_k - \bU_k\bV^\top_k, k = 1,2$, 
$\bY = (\bY_1,\bY_2)$, $\bZ = (\bZ_1,\bZ_2)$, 
$\bU = (\textbf{1}_n, \bU_1, \bU_2)$, 
$\bZ = (\bZ_1, \bZ_2)$ and 
$$
\bA = 
\begin{pmatrix}
\bA_1 & \mathbf{0} \\
\mathbf{0} & \bA_2 
\end{pmatrix}
,
$$
then we can rewrite the minimization problem \eqref{eq:ZSOC} as
\begin{equation}\label{eq:ZSOC-Gaussians}
\begin{split}
\minimize_{\bZ_1,\bZ_2}&{\frac{1}{2}\|\bY - \bZ\bA^\top\|_F^2}, \\
\text{subject to }& \quad \bU^\top\bZ = \mathbf{0}, \bZ^\top\bZ = \bI.
\end{split}
\end{equation}
We know that 
\begin{align*}
\minimize_{\bZ_1,\bZ_2}{\|\bY - \bZ\bA^\top\|_F^2} 
&= \minimize_{\bZ_1,\bZ_2}{-2\Tr(\bY^\top\bZ\bA^T) + \text{constant}}\\
&= \minimize_{\bZ_1,\bZ_2}{\|\bY\bA - \bZ\|_F^2 + \text{constant}}\\
&\text{subject to } \bU^\top\bZ = \mathbf{0}, \bZ^\top\bZ = \bI.
\end{align*}
There is a closed-form solution for the above optimization problem, which is illustrated in Theorem~\ref{thm:soc}.

\begin{thm}\label{thm:soc}
Let $\bU\in\R^{n\times r}$  be an orthogonal matrix and $\bC\in\R^{n \times p}$ be a full-rank matrix. Then the constrained quadratic problem:
$$\bP^* = \argmin_{\bP} \|\bP-\bC\|_F^2, \text{  s.t.  } \bU^\top \bP=\bf0~~\&~~\bP^\top \bP=\bI.$$
has the following closed-form solution:
$$\bP^* = \bM\bN^\top,$$
where $\bM$ and $\bN$ are two orthogonal matrices and $\bD$ is a diagonal matrix satisfying the compact SVD factorization $(\bI-\bU\bU^\top)\bC=\bM\bD\bN^\top$.
\end{thm}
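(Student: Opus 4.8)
The plan is to recognize this as an orthogonal Procrustes problem augmented with the linear orthogonality constraint $\bU^\top\bP = \mathbf{0}$, and to solve it by reducing the objective to a trace maximization and then bounding that trace through the singular value decomposition. First I would expand $\|\bP - \bC\|_F^2 = \Tr(\bP^\top\bP) - 2\Tr(\bP^\top\bC) + \Tr(\bC^\top\bC)$ and note that on the feasible set $\bP^\top\bP = \bI$ the first term equals the constant $p$ while the last term is independent of $\bP$; hence minimizing the objective is equivalent to maximizing $\Tr(\bP^\top\bC)$. I would then use the constraint to project $\bC$: since $\bP^\top\bU = \mathbf{0}$ gives $\Tr(\bP^\top\bU\bU^\top\bC) = \Tr\big((\bP^\top\bU)\bU^\top\bC\big) = 0$, we have $\Tr(\bP^\top\bC) = \Tr\big(\bP^\top(\bI - \bU\bU^\top)\bC\big) = \Tr(\bP^\top\widetilde{\bC})$ with $\widetilde{\bC} = (\bI - \bU\bU^\top)\bC$. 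So the problem reduces to maximizing $\Tr(\bP^\top\widetilde{\bC})$ over the same feasible set.

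The core step is to upper-bound this trace using the compact SVD $\widetilde{\bC} = \bM\bD\bN^\top$. Writing $\Tr(\bP^\top\bM\bD\bN^\top) = \Tr(\bD\bW^\top)$ with $\bW = \bM^\top\bP\bN$, the objective becomes $\sum_i d_i w_{ii}$, where the $d_i \ge 0$ are the singular values. Because $\bM$, $\bP$, and $\bN$ each have orthonormal columns, each has operator norm one, so $\|\bW\|_{op} \le 1$ and therefore $|w_{ii}| = |\boldsymbol{e}_i^\top\bW\boldsymbol{e}_i| \le \|\bW\|_{op} \le 1$ for every $i$. Since the $d_i$ are nonnegative, $\sum_i d_i w_{ii} \le \sum_i d_i = \Tr(\bD)$, a uniform upper bound; this operator-norm estimate is the constrained analogue of von Neumann's trace inequality.

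Finally I would verify that $\bP^* = \bM\bN^\top$ is feasible and attains $\Tr(\bD)$. For feasibility, because $\Col(\widetilde{\bC}) \subseteq \Col(\bI - \bU\bU^\top) = \Col(\bU)^\perp$, the left singular vectors satisfy $\Col(\bM) \subseteq \Col(\bU)^\perp$, so $\bU^\top\bM = \mathbf{0}$ and hence $\bU^\top\bP^* = \mathbf{0}$; moreover $\bP^{*\top}\bP^* = \bN\bM^\top\bM\bN^\top = \bN\bN^\top = \bI$. Substituting then yields $\Tr(\bP^{*\top}\widetilde{\bC}) = \Tr(\bN\bM^\top\bM\bD\bN^\top) = \Tr(\bN\bD\bN^\top) = \Tr(\bD)$, which matches the bound, so $\bP^*$ is optimal.

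I expect the main obstacle to be the rank and dimension bookkeeping rather than the trace inequality itself. The identity $\bP^{*\top}\bP^* = \bI$ requires $\bN\bN^\top = \bI$, i.e. that $\bN$ is square orthogonal, which holds only when $\widetilde{\bC} = (\bI - \bU\bU^\top)\bC$ has full column rank $p$. This is precisely where the full-rank hypothesis on $\bC$ must be invoked, together with the standing requirement $p \le n - r$ that makes the feasible set nonempty and the condition $\Col(\bC) \cap \Col(\bU) = \{\mathbf{0}\}$ ensuring that projecting off $\Col(\bU)$ does not drop the rank; I would state this genericity condition explicitly before forming the compact SVD so that $\bM$ and $\bN$ are genuinely of the asserted sizes.
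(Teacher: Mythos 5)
Your proof is correct and follows essentially the same route as the paper's: both reduce the problem to an orthogonal Procrustes problem for the projected matrix $(\bI-\bU\bU^\top)\bC$ and then observe that its solution $\bM\bN^\top$ automatically satisfies $\bU^\top\bP=\mathbf{0}$ because $\Col(\bM)\subseteq\Col(\bU)^\perp$; the only difference is that you prove the Procrustes step yourself via the SVD/trace inequality where the paper cites Theorem~1 of \citet{Lai:2014dq}. Your closing remark that $(\bI-\bU\bU^\top)\bC$ must retain full column rank $p$ (so that $\bN\bN^\top=\bI$), which requires more than the stated full-rank hypothesis on $\bC$ alone, is a valid observation about a condition the paper's proof leaves implicit.
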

\begin{proof}
Assume the constraint $\bU^\top \bP=\bf0$ holds, then we can rewrite the objective function as
\begin{equation*}
\begin{split}
\|\bP-\bC\|_F^2&=\left\|\bP-\left[\bU\bU^\top \bC + (\bI-\bU\bU^\top)\bC\right]\right\|_F^2\\
&=\left\|\bP-(\bI-\bU\bU^\top)\bC\right\|_F^2+\left\|\bU\bU^\top \bC  \right\|_F^2
\end{split}
\end{equation*}
Therefore, the constrained quadratic problem is equivalent to the following  one
$$\bP^* = \arg\min_{\bP} \|\bP-(\bI-\bU\bU^\top)\bC\|_F^2, \text{  s.t.  } \bU^\top \bP=\bf0~~\&~~\bP^\top \bP=\bI.$$

Note that the above problem can be relaxed to the following Orthogonal Procrustes problem
$$\widetilde\bP = \arg\min_{\bP} \|\bP-(\bI-\bU\bU^\top)\bC\|_F^2, \text{  s.t.  } \bP^\top \bP=\bI.$$
By the results from Theorem~1 in \citet{Lai:2014dq}, we have $\widetilde\bP= \bM\bN^\top$. Since $\bU^\top\widetilde\bP=0$, we have $\bP^*=\widetilde\bP= \bM\bN^\top$.
\end{proof}
Now we have the analytical solution for the minimization problem \eqref{eq:ZSOC-Gaussian}, which is 
$$
\bZ^* = \bQ\bR^\top,
$$
where $\bQ$ and $\bR$ are two orthogonal matrices and $\bD$ is a diagonal matrix satisfying the SVD factorization $(\bI-\bU\bU^{+})\bY\bA=\bQ\bD\bR^\top$. 

\subsubsection{SOC solver for non-Gaussian case}
For non-Gaussian case, because of different $b(\cdot)$, we cannot write down a simple minimization problem like \eqref{eq:ZSOC-Gaussian}. Multiple methods have been proposed to convex problems with only orthogonality constraint \citep{Lai:2014dq, wen2013feasible}. Inspired by the idea of method of splitting orthogonality constraints (SOC) and Bregman iteration method \citep{yin2008bregman,Lai:2014dq}, we propose a new algorithm to solve~\eqref{eq:ZSOC}.

We introduce two auxiliary variables $\bP_1=\bZ_1$ and $\bP_2=\bZ_2$ to separate the original constraints into an orthogonal constrained problem with an analytical solution and an unconstrained one. Hence the minimization problem~\eqref{eq:ZSOC} becomes

\begin{equation}\label{eq:ZSOC2s}
\begin{split}
\minimize_{\bZ_1,\bZ_2,\bP_1,\bP_2}L&=\minimize_{\bZ_1,\bZ_2,\bP_1,\bP_2}L(\bTheta_1|\bX_1) + L(\bTheta_2|\bX_2) \\
&\text{subject to } \bpm\bZ_1&\bZ_2\epm = \bpm \bP_1&\bP_2\epm, \bpm\textbf{1}_n&\bU_1&\bU_2\epm ^\top\bpm\bP_1&\bP_2\epm=\bf{0}\\
&~~~~~~~~~~~~~~~~~~~\text{           and } \bpm\bP_1&\bP_2\epm^\top \bpm\bP_1&\bP_2\epm = \bI.
\end{split}
\end{equation}

Solving the above problem by adding Bregman penalties leads to an iterative algorithm that solves

\begin{equation*}
\begin{cases}
\bZ_1^{(t)},\bZ_2^{(t)},\bP_1^{(t)},\bP_2^{(t)}=\argmin_{\bZ_1,\bZ_2,\bP_1,\bP_2}L+\frac{\gamma}2\|\bZ_1-\bP_1+\bB_1^{(t-1)}\|_F^2+\frac{\gamma}2\|\bZ_2-\bP_2+\bB^{(t-1)}_2\|_F^2,\\%L(\bX_1|\bTheta_1) + L(\bX_2|\bTheta_2) +  \\
%&~~~~~~~~~~~~~~~~~~~~~~~~~~~~~~\frac{\gamma}2\|\bZ_1-\bP_1+\bB_1^k\|_F^2+\frac{\gamma}2\|\bZ_2-\bP_2+\bB^k_2\|_F^2,\\
~~~~~~~~~~~~~~~~\text{subject to } \bpm\textbf{1}_n&\bU_1&\bU_2\epm ^\top\bpm\bP_1&\bP_2\epm=\bf{0}\text{ and } \bpm\bP_1&\bP_2\epm^\top \bpm\bP_1&\bP_2\epm = \bI,
\\
\bpm\bB_1^{(t)}\\\bB_2^{(t)}\epm =\bpm\bB_1^{(t-1)}\\\bB_2^{(t-1)}\epm + \bpm\bZ_1^{(t)}\\\bZ_2^{(t)}\epm -\bpm\bP_1^{(t)}\\\bP_2^{(t)}\epm,
\end{cases}
\end{equation*}
where $\gamma$ is a positive tuning parameter. Notice that the first optimization problem is separable and can be solved by iteratively updating $\bZ_k$ and $\bP_k$, $k=1,2$. The algorithm can be further formulated as
\begin{equation*}
\begin{cases}
~~~~\bZ_1^{(t)}&=\argmin_{\bZ_1}L(\bTheta_1|\bX_1) + \frac{\gamma}2\|\bZ_1-\bP_1^{(t-1)}+\bB_1^{(t-1)}\|_F^2\\
~~~~\bZ_2^{(t)}&=\argmin_{\bZ_2}L(\bTheta_2|\bX_2) + \frac{\gamma}2\|\bZ_2-\bP_2^{(t-1)}+\bB_2^{(t-1)}\|_F^2 \\
\bP_1^{(t)},\bP_2^{(t)}&=\argmin_{\bP_1,\bP_2} \frac{\gamma}2\|\bZ^{(t)}_1-\bP_1+\bB_1^{(t-1)}\|_F^2+\frac{\gamma}2\|\bZ^{(t)}_2-\bP_2+\bB^{(t-1)}_2\|_F^2, \text{  subject to } \\ &~~~~~\bpm\bP_1&\bP_2\epm^\top \bpm\bP_1&\bP_2\epm = \bI
\text{ and }\bpm\textbf{1}_n&\bU_1&\bU_2\epm ^\top\bpm\bP_1&\bP_2\epm=\bf{0},
\\
\bpm\bB_1^{(t)}\\\bB_2^{(t)}\epm &=\bpm\bB_1^{(t-1)}\\\bB_2^{(t-1)}\epm + \bpm\bZ_1^{(t)}\\\bZ_2^{(t)}\epm -\bpm\bP_1^{(t)}\\\bP_2^{(t)}\epm.
\end{cases}
\end{equation*}

The first two optimization problems for updating $\bZ_1$ and $\bZ_2$ are convex and can be solved similarly by using damped Newton's method with a proper step size. The constrained problem for updating $\bP_1$ and $\bP_2$ has a closed-form solution illustrated in theorem~\ref{thm:soc}. In summary, we use splitting orthogonal constraint (SOC) Algorithm~\ref{a:ZSOC} to iteratively update the individual score matrices $\bZ_1$ and $\bZ_2$.

\begin{algorithm}[!t]
\caption{Splitting orthogonal constraint algorithm for~\eqref{eq:ZSOC2}}\label{a:ZSOCs}
\begin{algorithmic}[1]
\Require Given: $t=0$, $\bZ_1^{(0)}$, $\bZ_2^{(0)}$, $\bU=(\textbf{1}_n,\bU_1,\bU_2), t_{max}$
\State Initialize $\bP_1^{(0)} \gets \bZ_1^{(0)},\bP_2^{(0)} \gets \bZ_2^{(0)},\bB_1^{(0)} \gets 0,\bB_2^{(0)} \gets 0$
\While{$t \neq t_{max}$ and `not converge'}
\State $t \gets t+1$;
\State $\bZ_1^{(t)} \gets \argmin_{\bZ_1}L(\bTheta_1|\bX_1) + \frac{\gamma}2\|\bZ_1-\bP^{(t-1)}_1+\bB_1^{(t-1)}\|_F^2$. 
\State $\bZ_2^{(t)} \gets \argmin_{\bZ_2}L(\bTheta_2|\bX_2)  +\frac{\gamma}2\|\bZ_2-\bP_2^{(t-1)}+\bB^{(t-1)}_2\|_F^2$. 
\State Compute SVD of $(\bI-\bU\bU^{+})(\bZ_1^{(t)}+\bB_1^{(t-1)},\bZ_2^{(t)}+\bB_2^{(t-1)})=\bM\bD\bN^\top$.
\State $(\bP_1^{(t)},\bP^{(t)}_2)\gets\bM\bN^\top$.
\State $\bB_1^{(t)} \gets\bB_1^{(t-1)} + \bZ_1^{(t)} -\bP_1^{(t)}.$
\State $\bB_2^{(t)} \gets\bB_2^{(t-1)} + \bZ_2^{(t)} -\bP_2^{(t)}.$
\EndWhile
\State \Return {$\bZ_k^{(t)}, \bP_k^{(t)}, \bB_k^{(t)}, k=1, 2$}
\end{algorithmic}
\end{algorithm}

\subsubsection{Analytical update for Z in SOC algorithm for Gaussian case}

In the SOC update for $\bZ_1$, we solve the following optimization problem:
$$
\bZ_1^{(k)}=\argmin_{\bZ_1}L(\bTheta_1|\bX_1) + \frac{\gamma}2\|\bZ_1-\bP_1+\bB_1^k\|_F^2
$$
Assuming that the exponential family is Gaussian, we have 
$$L(\bTheta|\bX) = \frac{1}{2}\|\bX-\bTheta\|^2_F + constant
$$
Disregard all the subscripts and superscripts and we get the update of $\bZ$ to be 
$$
\bZ^*=\argmin_{\bZ}\|\bX-\textbf{1}_n\bmu^\top - \bU\bV^\top - \bZ\bA^\top\|^2_F + \gamma\|\bZ-\bP+\bB\|_F^2
$$
Denote 
$$
\bY = \bX-\textbf{1}_n\bmu^\top - \bU\bV^\top, \bW = \bB - \bP 
$$
we have 
$$
\bZ^*=\argmin_{\bZ}\|\bY - \bZ\bA^\top\|^2_F + \gamma\|\bZ + \bW\|_F^2
$$
Take the derivative and let the derivative equal to zero, we have the following optimal condition of $\bZ$:
$$
2(\bZ\bA^\top - \bY)\bA + 2\gamma(\bZ + \bW) = \mathbf{0}
$$
Solve the above equation, we get:
$$
\bZ^* = (\bY\bA-\gamma\bW)(\bA^\top\bA + \gamma\bI)^{-1},
$$
where $\bI$ is the identity matrix.

\subsection{Update of correlated scores}
In this section, we will focus on updating correlated scorem matrices without the constraint of $\bU_1^\top \bU_2=\bLambda$. In Section 3.5 of our main manuscript, we discuss the procedure of rotation of joint signals so that this ignored constraint is satisfied without changing the objective values.

With other parameters fixed, we formulate the optimization problem to update $\bU_1$ and $\bU_2$ as
\begin{equation}\label{eq:USOCs}
\begin{split}
\minimize_{\bU_1,\bU_2}L&=\minimize_{\bU_1,\bU_2}L(\bTheta_1|\bX_1) + L(\bTheta_2|\bX_2) \\
&\text{subject to } \bpm\textbf{1}_n&\bZ_1&\bZ_2\epm ^\top\bpm\bU_1&\bU_2\epm=\bf{0}, \quad\bU_1^\top\bU_1 = \bU_2^\top\bU_2 = \bI.
\end{split}
\end{equation}

\subsubsection{Analytical solver for Gaussian case}
For Gaussian distribution, the minimization problem becomes
\begin{equation}
\begin{split}
\minimize_{\bU_1,\bU_2}L&=\minimize_{\bU_1,\bU_2}\frac{1}{2}\|\bX_1 - \textbf{1}_n\bmu^\top_1 - \bZ_1\bA^\top_1 - \bU_1\bV^\top_1\|^2_F + \frac{1}{2}\|\bX_2- \textbf{1}_n\bmu^\top_2 - \bZ_2\bA^\top_2 - \bU_2\bV^\top_2\|^2_F\\
&\text{subject to } \bpm\textbf{1}_n&\bZ_1&\bZ_2\epm ^\top\bpm\bU_1&\bU_2\epm=\bf{0} \text{ and } \bU_1^\top\bU_1 = \bU_2^\top\bU_2 = \bI,
\end{split}
\end{equation}
Denote $\bB_k = \bX_k - \textbf{1}_n\bmu_k - \bZ_k\bA^\top_k, k = 1,2$, we notice that the optimization could be separated into two problems:
\begin{equation}\label{eq:USOC-Gaussian}
\begin{split}
&\minimize_{\bU_k}\frac{1}{2}\|\bB_k - \bU_k\bV^\top_k\|^2_F\\
&\text{subject to } \bpm\textbf{1}_n&\bZ_1&\bZ_2\epm ^\top\bU_k=\bf{0} \text{ and } \bU_k^\top\bU_k = \bI.
\end{split}
\end{equation}
From theorem~\ref{thm:soc}, for $k=1,2$, we know that the optimal solution is 
$$
\bU_k^* = \bG_k\bH_k^\top,
$$
where $\bG_k$ and $\bH_k$ are two orthogonal matrices and $\bL_k$ is a diagonal matrix satisfying the compact SVD factorization $(\bI-\bZ\bZ^{+})\bB_k\bV_k=\bG_k\bL_k\bH_k^\top$, where $\bZ = \bpm\textbf{1}_n&\bZ_1&\bZ_2\epm$.

\subsubsection{SOC solver for non-Gaussian case}
Rewriting the optimization problem with respect to $\bU_k$ without diagonal constraints separates the problem across $k = 1,2$, leading to two separate optimization problems of the same form:
\begin{equation}\label{eq:USOCss}
\begin{split}
\minimize_{\bU_k}&\{ L(\bB_k+\bU_{k}\bV_k^\top|\bX_k)\} \\
\text{subject to }& \bpm\textbf{1}_n&\bZ_1&\bZ_2\epm ^\top\bU_k={\bf{0}}, \quad\bU_k^\top\bU_k = \bI.
\end{split}
\end{equation}

Similar to the SOC update of individual structures in non-Gaussian case, we introduce two auxiliary variables $\bQ_1=\bU_1$ and $\bQ_2=\bU_2$, and derive Algorithm~\ref{a:USOC} to update the joint score matrices $\bU_1$ and $\bU_2$.
\begin{algorithm}[H]
\caption{Splitting orthogonal constraint algorithm for~\eqref{eq:USOC}}\label{a:USOC}
\begin{algorithmic}[1]
\Require $\bU_1^{(0)}, \bU_2^{(0)}, \bZ = (\textbf{1}_n,\bZ_1,\bZ_2), t_{max}$
\State $\bQ_1^{(0)} \gets \bU_1^{(0)}, \bQ_2^{(0)} \gets \bU_2^{(0)}$
\State $\bB_1^{(0)} \gets 0, \bB_2^{(0)} \gets 0, t \gets 0$
\While{$t \neq t_{max}$ and `not converge'}
\State $t \gets t+1$
\State $\bU_1^{(t)} \gets \argmin_{\bU_1}{L(\bTheta_1|\bX_1) + \frac{\gamma}2\|\bU_1-\bQ^{(t-1)}_1+\bB_1^{(t-1)}\|_F^2}$
\State $\bU_2^{(t)} \gets \argmin_{\bU_2}{L(\bTheta_2|\bX_2)  +\frac{\gamma}2\|\bU_2-\bQ_2^{(t-1)}+\bB^{(t-1)}_2\|_F^2}$
\State Compute SVD of $(\bI-\bZ\bZ^{+})(\bU_1^{(t)}+\bB_1^{(t-1)})=\bM_1\bD_1\bN_1^\top$
\State Compute SVD of $(\bI-\bZ\bZ^{+})(\bU_2^{(t)}+\bB_2^{(t-1)})=\bM_2\bD_2\bN_2^\top$
\State $\bQ_1^{(t)} \gets \bM_1\bN_1^\top$
\State $\bQ_2^{(t)} \gets \bM_2\bN_2^\top$
\State $\bB_1^{(t)} \gets \bB_1^{(t-1)} + \bU_1^{(t)} -\bQ_1^{(t)}$
\State $\bB_2^{(t)} \gets \bB_2^{(t-1)} + \bU_2^{(t)} -\bQ_2^{(t)}$
\EndWhile
\State \Return {$\bU_k^{(t)}, \bQ_k^{(t)}, \bB_k^{(t)}, k=1, 2$}
\end{algorithmic}
\end{algorithm}

\renewcommand{\thesection}{Appendix C}
\renewcommand{\thesubsection}{C.\arabic{subsection}}

\section{Simulation details}\label{ss:extrasim}
We use our ECCA model to generate the natural parameters, i.e.,
\begin{equation}
\begin{cases}
\bTheta_{1}=\textbf{1}_n\bmu_{1}^\top + \bU_{1}\bV_1^\top+\bZ_{1}\bA_1^\top\\
\bTheta_{2}=\textbf{1}_n\bmu_{2}^\top + \bU_{2}\bV_2^\top+\bZ_{2}\bA_2^\top \nonumber
\end{cases}.
\end{equation}
Then generate data by
$$
\bX_k \sim f(\bTheta_k),
$$
where $f$ is a probability density or mass function from exponential family with respect to natural parameter matrix $\bTheta_k$. In the simulation, we will consider two distributions from exponential family, namely Gaussian and Binomial proportion distribution.
\subsection{Generating joint score matrices}
To generate $\bU_1$ and $\bU_2$, we need to make sure that they satisfy the following conditions:
$$
\textbf{1}_n^\top\bU_1 = \textbf{1}_n^\top\bU_2 = \mathbf{0}, \quad \bU_1^\top\bU_2 = \bLambda, \quad \bU_1^\top\bU_1 = \bU_2^\top\bU_2 = \bI.
$$
Denote the eigenvalue decomposition 
$$
\begin{pmatrix}
\bI & \bLambda\\
\bLambda & \bI
\end{pmatrix}
= \bR\bSigma\bR^\top.
$$
We then generate a random matrix $\bG \in \R^{n\times n}$ with $\bG_{i,j} \sim \mathcal{N}(0,1)$ and column center it to be $\bar{\bG}$. We denote the first $2r_0 (2r_0 \leq n)$ left singular vectors of $\bar{\bG}$ to be $\bU_0$. The generated joint score matrix $\bU_1$ is the first $r_0$ columns of $\bU_0\sqrt{\bSigma}\bR^\top$, and $\bU_2$ is the last $r_0$ columns of $\bU_0\sqrt{\bSigma}\bR^\top$. We could check that the generated $\bU_1$ and $\bU_2$ satisfy all the regularity conditions. In all the simulation settings, we use

$$
\bLambda = 
\begin{pmatrix}
1 & 0 & 0\\
0 & 0.9 & 0\\
0 & 0 & 0.7\\
\end{pmatrix}.
$$

\subsection{Generating individual score matrices}
To generate $\bZ_k$, we need to make sure they satisfy the following regularity conditions:
$$
(\textbf{1}_n, \bU_1, \bU_2)^\top(\bZ_1, \bZ_2) = \mathbf{0}, \quad \bZ_1^\top\bZ_2 = 0, \quad \bZ_1^\top\bZ_1 = \bZ_2^\top\bZ_2 = \bI.
$$
Denote $\bU = (\textbf{1}_n, \bU_1, \bU_2)$. We first generate $\bZ \in \R^{n\times (r_1 + r_2)}$ from standard normal distribution. Denote the SVD: 
$$
(\bI - \bU\bU^{+})\bZ = \bP\bD\bQ^\top.
$$
Then we set $\bZ_1$ to be the first $r_1$ columns of $\bP$ and $\bZ_2$ to be the rest $r_2$ columns of $\bP$.

\subsection{Generating natural parameters}
We generate the mean vectors $\bmu_k$ from uniform distribution $(-1,-0.5)\cup(0.5,1)$. We generate the elements in $\bV_k$ and $\bA_k$ from uniform distribution $(-2,-1)\cup(1,2)$. When generating Gaussian data, we further scale $\bV_k$ so that the singular values of joint part $\bU_k\bV_k^\top$ are within $(22, 26.4)$, scale $\bA_1$ so that the singular values of individual part $\bZ_1\bA_1^\top$ are within $(15, 18)$ and scale $\bA_2$ so that the singular values of $\bZ_2\bA_2^\top$ are within $(18, 21.6)$. When generating the natural parameters of Binomial distribution, we further scale the loading matrix so that the generated data matrix does not contain too many zeros or ones. 

After getting all the parts, we get the natural parameters by our ECCA model:
$$
\bTheta_k=\textbf{1}_n\bmu_{k}^\top + \bU_k\bV_k^\top+\bZ_k\bA_k^\top, k = 1, 2.
$$

\subsection{Generating data matrices}
To generate data following Gaussian distribution, we use:
$$
\bX_k = \bTheta_k + \bE_k,
$$
where $\bE_k$ is the noise matrix with independent entries $e_{kij} \sim \mathcal{N}(0, \sigma_k^2),\ i \in \{1,\cdots,n\},\ j \in \{1,\cdots,p\}$. We use the signal to noise ratio (SNR) to control the size of noise. SNR is defined as:
$$\text{SNR} =\frac{\|\bTheta_k\|^2_F}{\mathbb{E}(\|\bE_k\|^2_F)} = \frac{\|\bTheta_k\|^2_F}{np\sigma^2_k}.$$

We use $\text{SNR} = 5$ for Gaussian data.

To generate Binomial proportion data, we use 100 number of trials and generate
$$
\bY_k \sim f_B(\bTheta_k),
$$
where $f_B(\cdot)$ is the binomial probability mass function corresponding to natural parameters and 100 number of trials. The generated Binomial proportion data $\bX_k$ is then
$$
\bX_k = \bY_k/100.
$$

We use joint rank $r_0 = 3$ and individual ranks $r_1 = 4, r_2 = 3$. We set sample size $n = 50$ and number of columns to be $p_1 = 30, p_2 = 20$.

\subsection{Implementation details}
In this section, we discuss the implementation details in simulation section. For the simulated Binomial proportion data, if there are any zeros or ones in $\bX_k$, we adopt the adjustments as in Chapter 10 of \citet{ott2015introduction}. To be specific, zeros are replaced by $0.375/(m + 0.75)$ whereas ones are replaced by $(m + 0.375)/(m + 0.75)$, where $m$ is the number of trials. Then ECCA, DCCA, EPCA-DCCA, GAS-rank1 and GAS-rank3 are applied to the (processed) generated data. To be specific,

\begin{itemize}
    \item ECCA: proposed method. 
    \item DCCA: first estimate the saturated natural parameters without constraints from noisy data, then apply DCCA through the Python code provided in the supplement of \citet{shu2020d}.
    \item EPCA-DCCA: first adopt exponential PCA through \textsf{generalizedPCA} R package \citep{generalizedPCA} and then apply DCCA to the low-rank estimated natural parameters 
    \item GAS: Apply GAS method from Github repository \url{https://github.com/reagan0323/GAS} to the generated data. We consider GAS-rank3 (joint rank 3), which is a misspecified model enforcing top three canonical correlations as one and GAS-rank1(joint rank 1), which puts the 2nd and 3rd canonical pairs as individual signals. 
\end{itemize}

After applying different methods on generated data, we obtain the estimated natural parameter $\widehat\bTheta_k$ and use relative error defined as
$$
\text{relative error}=\frac{\|\widehat\bTheta_k - \bTheta_k\|_F^2}{\|\bTheta_k\|_F^2},\ k=1,2,
$$
to assess the accuracy on overall signal estimation, where $\bTheta_k$ is the true natural parameter matrix. For GAS method in Binomial proportion case, we multiply the output natural parameters by $m$ to account for the scaling issue. We then calculate the top three canonical variables corresponding to each data to be the estimated joint signal $\widehat{\bJ}_k$. To assess the joint signal estimation performance, we evaluate the chordal distance \citep{ye2016schubert}
$$
\frac{1}{\sqrt{2}}\left\|\bJ_k\bJ_k^{+} - \widehat{\bJ}_k\widehat{\bJ}_k^{+}\right\|_F,\ k=1,2.
$$

\renewcommand{\thesection}{Appendix D}
\renewcommand{\thesubsection}{D.\arabic{subsection}}

\section{Additional details on data analyses}\label{ss:data}
\subsection{Nutrigenomic study}
We apply GAS method of \citet{li2018general} to the same nutrimouse data for cross-comparison with ECCA. Applying GAS rank selection procedure via cross-validation leads to same selected ranks as ECCA model, that is joint rank $r_0 = 2$, and total ranks $r_1 = 3$ and $r_2=4$. While the ranks are in agreement with ECCA, there is a difference in interpretation. In ECCA, the two joint scores have correlations 0.87 and 0.65, whereas in GAS these scores are estimated as equal (correlation 1 for both). In ECCA, the individual scores between two datasets are orthogonal, whereas in GAS they are correlated (correlation values -0.16 and 0.59). The relatively high correlation between individual scores means that those signals can not be interpreted as truly dataset-specific. 

We further compare how discriminative are GAS joint and individual scores with respect to mice genotype and diet. \ref{fig:GASmice} shows the corresponding scatterplots of joint scores and individual lipid scores. As with ECCA, there is a clear genotype effect in joint scores, and a clear diet effect in individual lipids scores. In contrast to ECCA, there is no joint diet effect.

\begin{figure}[!t]
    \centering
    \includegraphics[width=0.49\textwidth]{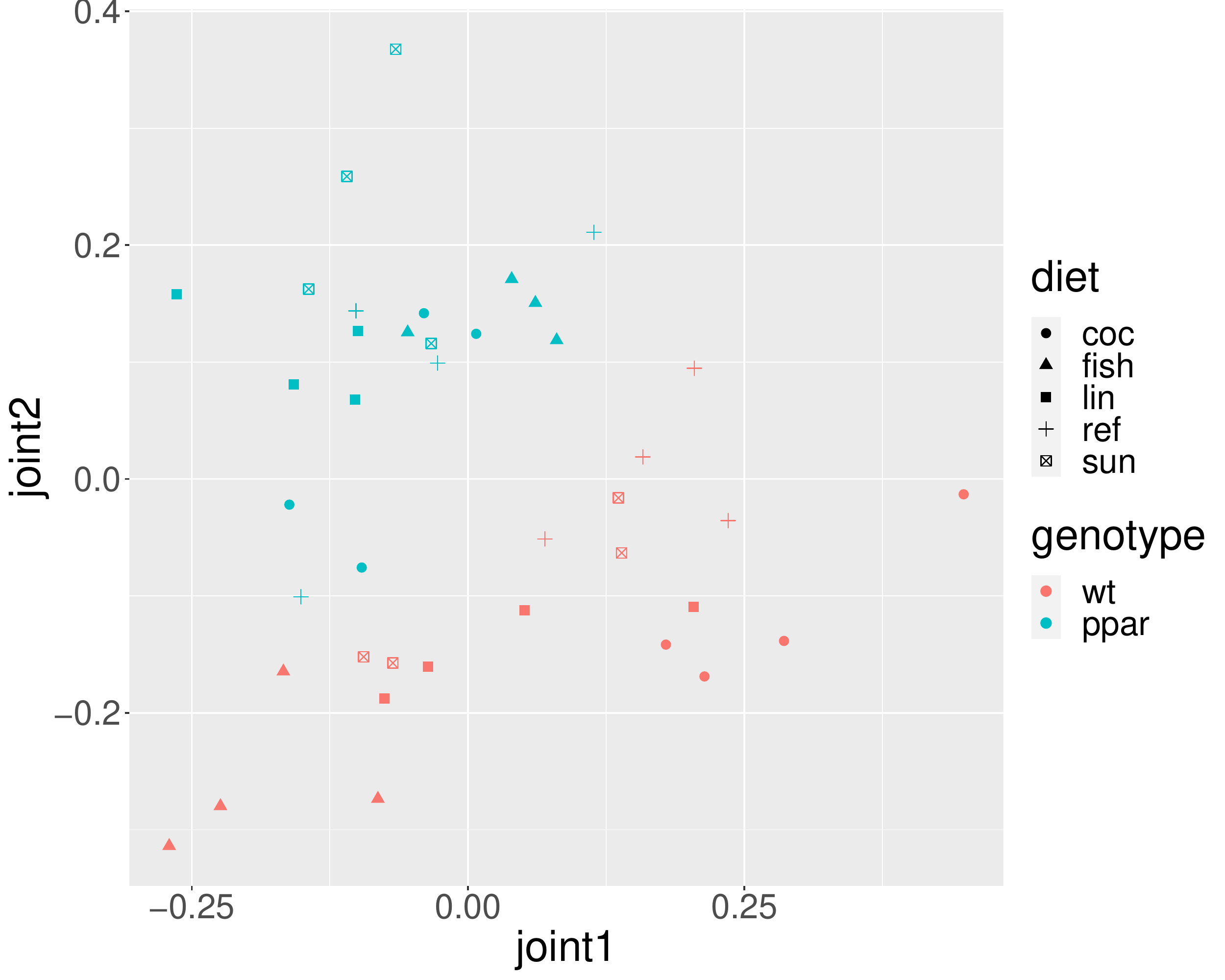}
    \includegraphics[width=0.49\textwidth]{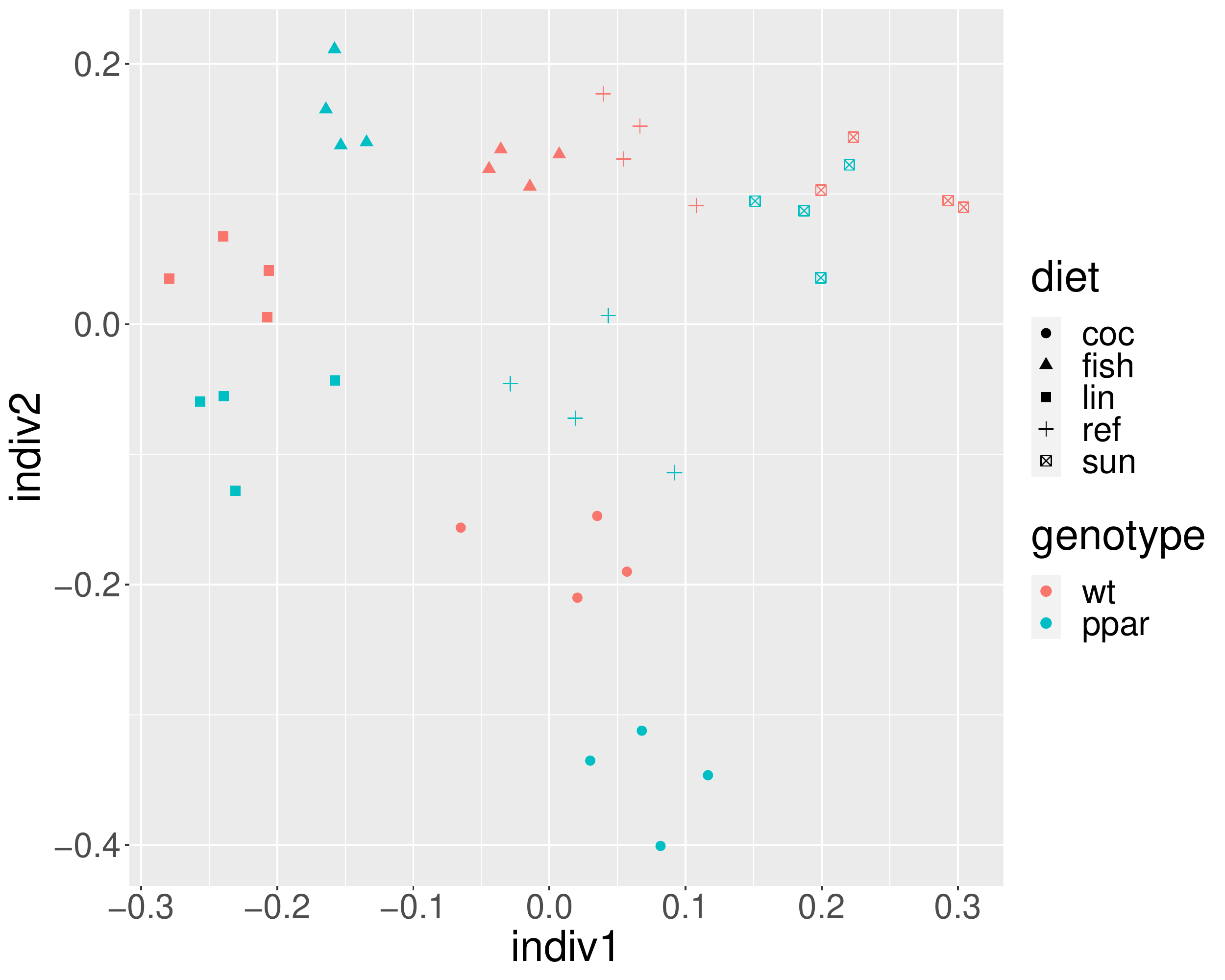}
    \caption{GAS scores from nutrimouse data colored by genotype and diet. Left: Joint scores between gene expressions and lipid concentrations. Right: Individual scores for lipid concentrations.}
    \label{fig:GASmice}
\end{figure}

To further quantify which method provides better genotype and diet discrimination, we calculate SWISS scores \citep{cabanski2010swiss} separately for joint and individual components. SWISS score characterizes class (genotype or diet) distinctions using the standardized within‐class sum of squares. The smaller is the SWISS score, the better is class separation. 

On joint components, the SWISS score for genotype is 0.57 for ECCA compared to 0.63 for GAS. Furthermore, when looking at each joint component individually, ECCA first joint component has SWISS score of 0.15, whereas the GAS components have SWISS scores of 0.41 and 0.84. Therefore, ECCA joint structure is more disriminative with respect to genotype.  Similarly for diet, GAS joint components give SWISS score of 0.86, whereas ECCA joint components have SWISS scores of 0.59. Looking at each component separately, the diet effect is captured by 2nd ECCA joint component (SWISS score 0.15) whereas each joint component of GAS gives a score of 0.85. Therefore, GAS joint components miss shared diet effect, which is captured by 2nd joint component of ECCA.

We further compare SWISS scores on individual components for lipids. As expected, the genotype effect is not present in either ECCA or GAS (SWISS scores 0.98 and 0.95, respectively). Both ECCA and GAS provide good diet discrimination (SWISS score 0.15 for both). However, while for ECCA the signal in individual components can be considered lipid-specific (due to orthogonality with all components in gene expression data), for GAS there is a non-ignorable correlation between individual components of two datasets (correlation values -0.16 and 0.59), suggesting that the diet effect captured by GAS in individual lipids components is at least partially shared with gene expression.

\subsection{Tumor heterogeneity study}
Raw read counts of high-throughput mRNA sequencing data, clinical data and somatic mutations from 293 tumor samples was downloaded from the Genomic Data Commons Data Portal \url{https://portal.gdc.cancer.gov/}. 

Using deconvolution to partition tumor and non-tumor cells within the same sample under the same experimental conditions provides a mathematical means to cancel out the effect of technical artefacts while maintaining the effect of cell-type-specific total mRNA counts. We will use our developed three-component deconvolution framework of DeMixT \citep{wang2018transcriptome}, a semi-supervised deconvolution method, to estimate the tumor, stromal (normal) and immune specific transcriptional proportion. For sample $j$ and across any gene $g$, we have: $Y_{jg} = \pi_{1,j} N_{jg}^{'} + \pi_{2,j} I_{jg}^{'}  + (1-\pi_{1,j}  - \pi_{2, j}) T_{jg}^{'}$, where $Y_{jg}$  represents the scale normalized expression matrix from mixed tumor samples, $T_{jg}^{'}$ , $N_{jg}^{'}$  and $I_{jg}^{'}$ represent the normalized relative expression of gene $g$ within tumor, stromal and immune cells, respectively. The DeMixT model applies iterated conditional modes (ICM) to maximize the full log-likelihood function and estimate the cell-type specific transcriptomic proportions ($\pi_{1,j}$  and $\pi_{2,j}$). TIMER \citep{li2017timer} was performed following the standard pipeline to estimate the cell composition estimation of six immune cell types (B cells, CD4-T cells, CD8-Tcells, Dendritic cells, Macrophage cells and Neutrophil cells).

For downstream analyses, we filtered samples whose Gleason score $\leq$ 6 ($n = 54$) and left 239 samples. We fitted multivariate Cox proportional hazard models with age, Gleason score (Gleason score of 7 versus Gleason score of 8+), joint and individual ECCA scores as predictors of PFI for the TCGA PCa dataset and calculated HRs and 95\% CIs. We use the stepwise model selection method with AIC, where the baseline model includes age and joint scores with or without Gleason scores, and additional variables to select include the individual scores from DeMixT and TIMER.

\bibliographystyle{biom}
\bibliography{IrinaReferences}

\end{document}